\newcommand{\etal}{\emph{et al.}\xspace}
\renewcommand{\epsilon}{\varepsilon}
\newcommand{\BeginMyItemize}{\begin{itemize}\setlength{\itemsep}{-\parskip}}
\newcommand{\EndMyItemize}{\end{itemize}}
\newcommand{\myitemize}[1]{\BeginMyItemize #1 \EndMyItemize}
\newcommand{\BeginMyEnumerate}{\begin{enumerate}\setlength{\itemsep}{-\parskip}}
\newcommand{\EndMyEnumerate}{\end{enumerate}}
\newcommand{\myEnumerate}[1]{\BeginMyEnumerate #1 \EndMyEnumerate}
\newcommand{\BeginMyDescription}{\begin{description}\setlength{\itemsep}{-\parskip}}
\newcommand{\EndMyDescription}{\end{description}}
\newenvironment{lem-without-number}[1]{\paragraph{{\rm\bf Lemma~{#1}}} \sl}{\\[2mm]}   
\newenvironment{thm-without-number}[1]{\paragraph{{\rm\bf Theorem~{#1}}} \sl}{\\[2mm]}   
\newcommand{\mycomment}[2]{\footnote{\textcolor{blue}{#1 #2}}}
\newif\ifComments
  \newcommand{\markdb}[1]{\mycomment{Mark:}{#1}}
  \newcommand{\marcel}[1]{\mycomment{Marcel:}{#1}}
  \newcommand{\bettina}[1]{\mycomment{Bettina:}{#1}}
  \newcommand{\markdb}[1]{}
  \newcommand{\marcel}[1]{}
  \newcommand{\bettina}[1]{}
\DeclareMathSymbol{\Delta}{\mathalpha}{operators}{"01}
\DeclareMathSymbol{\Theta}{\mathalpha}{operators}{"02}
\DeclareMathSymbol{\Omega}{\mathalpha}{operators}{"0A}
\newcommand{\dist}{\mathrm{dist}}
\newcommand{\area}{\mathrm{area}}
\newcommand{\pol}{P}        
\newcommand{\reg}{R}
\newcommand{\nw}{\mathrm{nw}}
\newcommand{\NE}{\mathrm{ne}}
\newcommand{\sw}{\mathrm{sw}}
\newcommand{\se}{\mathrm{se}}
\newcommand{\north}{\mathrm{n}}
\newcommand{\south}{\mathrm{s}}
\newcommand{\west}{\mathrm{w}}
\newcommand{\east}{\mathrm{e}}
\newcommand{\hor}{\mathrm{h}}
\newcommand{\ver}{\mathrm{v}}
\newcommand{\subdiv}{\mathcal{S}}
\newcommand{\medaxis}{\mathcal{M}} 
\newcommand{\length}{\mathrm{length}}
\newcommand{\slope}{\sigma}
\newcommand{\trian}{\mathrm{T}}
\newcommand{\halfbox}{P^-}   
\newcommand{\eps}{\varepsilon}
\newcommand{\triset}{\mathcal{T}} 
\newcommand{\myconst}{\alpha}   
\newcommand{\qt}{\mathcal{QT}} 
\newcommand{\Reals}{{\Bbb R}}
\newcommand{\dtb}[1]{\Delta_{#1}}  
\newcommand{\pl}{\mathcal{PL}} 
\author[1]{Boris Aronov}
\author[2]{Mark de Berg}
\author[3]{David Eppstein}
\author[4,5]{Marcel Roeloffzen}
\author[2]{Bettina Speckmann}
\affil[1]{Dept.\ of Computer Science and Engineering, Tandon School of Engineering, \mbox{New York University, USA}, boris.aronov@nyu.edu}
\affil[2]{Dept.\ of Computer Science, TU Eindhoven, the Netherlands, \{mdberg,speckman\}@win.tue.nl}
\affil[3]{Dept.\ of Computer Science, Donald Bren School of Information and Computer Sciences, \mbox{University of California, Irvine}, eppstein@ics.uci.edu}
\affil[4]{National Institute of Informatics (NII), Tokyo, Japan, marcel@nii.ac.jp}
\affil[5]{JST Kawarabayashi ERATO Large Graph Project}
\title{Distance-Sensitive Planar Point Location\thanks
{
    B.~Aronov has been supported by U.S.-Israel Binational Science Foundation grant~2006/194, by NSF Grants CCF-08-30691, CCF-11-17336, and CCF-12-18791, and by NSA MSP Grant H98230-10-1-0210.
    D.~Eppstein has been supported by NSF grant 1217322 and ONR grant N00014-08-1-1015.
    M.~Roeloffzen and
    B.~Speckmann were supported by the Netherlands' Organisation for
    Scientific Research (NWO) under project no.~600.065.120 and~639.023.208, respectively.}}
\begin{document}

\maketitle

\begin{abstract}
Let $\subdiv$ be a connected planar polygonal subdivision with $n$ edges that we want to preprocess for
point-location queries, and where we are given the probability $\gamma_i$ that the query point lies in a polygon $P_i$ of $\subdiv$.
We show how to preprocess $\subdiv$ such that the query time for a point~$p\in P_i$ depends
on~$\gamma_i$ and, in addition, on the distance from $p$ to the boundary
of~$P_i$---the further away from the boundary, the faster the query.
More precisely, we show that a point-location query can be answered in time
$O\left(\min \left(\log n, 1 + \log \frac{\area(P_i)}{\gamma_i \dtb{p}^2}\right)\right)$,
where $\dtb{p}$ is the shortest Euclidean distance of the query point~$p$ to the boundary of $P_i$.
Our structure uses $O(n)$ space and $O(n \log n)$ preprocessing time.
It is based on a decomposition of the regions of $\subdiv$ into convex quadrilaterals and triangles with the following property: for any point $p\in P_i$, the quadrilateral or triangle containing~$p$ has area $\Omega(\dtb{p}^2)$.
For the special case where $\subdiv$ is a subdivision of the unit square and $\gamma_i=\area(P_i)$,
we present a simpler solution that achieves a query time of $O\left(\min \left(\log n, \log \frac{1}{\dtb{p}^2}\right)\right)$.
The latter solution can be extended to convex subdivisions in three dimensions.

\end{abstract}

\section{Introduction}
Point location is one of the most fundamental problems in computational geometry. Given a subdivision $\subdiv$ the goal is to preprocess it so that we can determine efficiently which region of~$\subdiv$ contains a query point~$p$. Many different variants of the point-location problem exist; in our work we first focus on planar point location in polygonal subdivisions and later extend one of our results to convex polyhedral subdivisions in three dimensions.
In the following, unless otherwise specified, our subdivision~$\subdiv$ is subdivision of a polygonal domain in the plane into polygons. The subdivision need not be conforming---we may have T-junctions, for instance---but when considering a polygon~$P_i$ of the subdivision we ignore the subdivision vertices whose angle inside  $P_i$ is exactly~$\pi$. A triangulation is a subdivision consisting of triangles.

There are several different solutions for planar point location that are worst-case optimal. In particular, there are structures that require $O(n\log n)$ preprocessing, use $O(n)$ space, and can answer a point-location query
in $O(\log n)$ time; see the surveys by Preparata~\cite{p1990} and Snoeyink~\cite{s2004} for an overview. In three dimensions no point location structure is known for general subdivisions that uses linear space and has logarithmic query time. For convex subdivisions Preparata and Tamassia~\cite{pt1992} showed that combining dynamic planar point location and persistency techniques yield an $O(n \log^2 n)$ space structure that answers queries in $O(\log^2 n)$ time. This method was later extended and improved so that it works for general subdivisions and requires only $O(n \log n)$ space and preprocessing time for $O(\log^2 n)$ query time~\cite{gt1998,s2004}.

For planar point location a query time of $O(\log n)$ is optimal in the worst case, but it may be possible to do better for certain types of query points. For example, if the query points are not distributed uniformly among the regions of~$\subdiv$,
then it may be desirable to reduce the query time for points in frequently queried regions.
Iacono~\cite{i2001} showed that this is indeed possible: given a triangulation $\subdiv$
where each triangular region $R_i$ has a probability $\gamma_i$ associated with
it---the probability that the query point $p$ falls in $R_i$---then one can
answer a point-location query in expected time $O(H(\subdiv))$, where
\[
H(\subdiv) := \sum_{R_i \in \subdiv} \gamma_i \log \frac{1}{\gamma_i},
\]
is the \emph{entropy} of~$\subdiv$. This result is optimal, because the entropy
is a lower bound on the expected query time~\cite{k1998,s1948}. Several other point-location
structures have been proposed that answer queries in $O(H(\subdiv))$ expected time~\cite{amm2007,ammw2007}.
The structure presented by Arya, Malamatos, and Mount~\cite{amm2007} is relatively simple
and efficient in practice. It works for subdivisions with constant-complexity regions and, for any region $R_i$ the worst-case query time for points inside $R_i$ is $O(1+\min(\log \frac{1}{\gamma_i},\log n))$.
The results mentioned so far assume that the distribution is known in advance.
Recently Iacono~\cite{i2011} proposed an algorithm that eventually achieves $O(H(\subdiv))$ query time,
but does not need any knowledge of the query distribution. Instead, the algorithm changes the structure
according to the queries received. The results mentioned above require the regions of
the input subdivision~$\subdiv$ to have constant complexity. This requirement is
necessary. Indeed, if a subdivision with $n$ edges has only two regions, each with associated probability~1/2,
then we cannot hope to achieve~$O(1)$ query time. One could of course subdivide the regions into constant complexity regions, say triangles, and distribute the query probability evenly among these regions. However, in many cases one would expect that queries are not evenly distributed within each polygon. For example if queries come from users selecting polygons by clicking on them, one would expect most queries to occur far from the boundary as users are inclined to click in the `middle' of a region. This raises the question if it is possible to improve query times depending on where the query point is within the polygon that contains it. In our work we investigate the possibility of relating the query time to the distance of a query point to the nearest point on the boundary of the region that contains it. We call this \emph{distance-sensitive point location}.

Differentiating between query points within higher complexity polygons is not new. Collette~\etal~\cite{cdilm2012} showed how to compute, for any simple polygon~$\pol$ and any probability distribution over~$\pol$, a Steiner triangulation with near-optimal entropy, and they proved that the minimum entropy of any triangulation is a lower bound on the expected query time for point-location in the linear decision-tree model. By applying their Steiner triangulation to every region in the given subdivision, and using the resulting triangles as input for an entropy-based point-location structure, near-optimal expected query time is achieved. In the case of distance-sensitive point location we could define a probability distribution based on the distance of points to the region boundary and construct such a Steiner triangulation. Unfortunately, a near-optimal entropy does not imply any bounds on specific query points. Indeed the construction by Collette~\etal can generate very small triangles, even in high probability areas. A point $p$ that is far from the boundary can end up in such a very small triangle, which has a small total probability. As a result a query for $p$ has a long query time. We will focus on creating a point location structure that guarantees fast query time for any point far from the region boundary.

\paragraph{Problem definition} Let $p$ be a query point inside polygon $\pol_i \in \subdiv$ with area $\area(\pol_i)$ and probability $\gamma_i$ that a query point is inside $\pol_i$. We want the time of a query for $p$ to be
\[
O\left(\min \left(\log n, 1 + \log \frac{\area(P_i)}{\gamma_i\dtb{p}^2}\right)\right).
\]
Here, $\dtb{p}$ denotes the minimum Euclidean distance from $p$ to the boundary of $P_i$. When the polygons in the subdivision have constant complexity, then this can be achieved using, for instance, the entropy-based point-location structure of Arya, Malamatos, and Mount~\cite{amm2007}. Since for any point $p\in P_i$ the distance to the boundary of $P_i$ is $O(\sqrt{\area(P_i)})$, this gives the desired query bound. When polygons have higher complexity we can also use an entropy-based structure, but first have to decompose each polygon into constant complexity regions and assign probabilities appropriately. Specifically we show that it is sufficient to compute for each polygon $\pol \in \subdiv$ with $n_P$ vertices a \emph{distance-sensitive decomposition} of $\pol$ into $O(n_P)$ regions with the following properties:
\myitemize{
\item each region $\reg$ is convex and has constant complexity;
\item for some absolute constant $\myconst$ the decomposition has the \emph{$\myconst$-distance property}:
      for any point $p\in \pol$, the region $\reg$ containing~$p$ has area at
      least $\myconst \cdot \dtb{p}^2$, where $\dtb{p}$ is the
      distance from $p$ to the boundary of $P$.
}
\begin{wrapfigure}[5]{r}{1.5cm}
\raggedleft
\hspace{-10pt}
\includegraphics{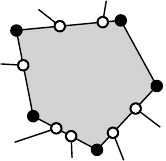}
\end{wrapfigure}

The entropy-based search structure by Arya, Malamatos, and Mount~\cite{amm2007}, which will serve as the backbone of our algorithm, requires its input regions to have constant complexity. Here the complexity of a region is counted as explained earlier: if the interior angle within a region $P_i$ at a subdivision vertex is exactly $\pi$, then that vertex does not count towards the complexity of~$P_i$. For instance, the shaded region in the figure on the right has only five vertices.

The problem of computing a decomposition with these properties can be considered a mesh-generation problem. Many different types of meshes exist;
see the survey by Bern~\cite{b2004} for an overview.
In several of these meshes the number of mesh elements is linear in the complexity of the polygon, and the mesh elements are ``well-shaped''. For example, the meshing algorithm proposed by
Bern~\etal~\cite{bmr1994} produces triangles with angles of at most $90^\circ$.
There are also meshes that are designed to be more
detailed near the polygon boundary and coarser further away from the boundary.
These meshes, however, do not guarantee a relation between the distance to the boundary and the size of mesh elements~\cite{bmr1994,r1995} or they do not have a bound on the number of mesh elements~\cite{beg1994}.
To the best of our knowledge no published mesh generation method guarantees that the mesh consists of $O(n_P)$ elements that have the required distance property.

Another ``query-sensitive'' subdivision was defined by Mitchel, Mount, and Suri to faciltate ray shooting~\cite{mms1997}.  It has the property that the cost of shooting a ray (i.e., walking along it through the subdivision, from its origin until the first point of intersection with the obstacle) is proportional to its ``cover complexity,'' which, roughly speaking, is the minimum number of disks that are required to cover this portion of the ray, with each disk not intersecting ``too much'' of an obstacle.  In our application, the role of the obstacle is taken by the exterior of the region to be subdivided.  However, the structure of~\cite{mms1997} does not seem have the right properties for our purposes.

\paragraph{Our results}
We start by describing in more detail how a \emph{distance-sensitive decomposition} can be used to construct a distance-sensitive point-location structure. We then continue by giving algorithms to compute distance-sensitive decompositions. For convex polygons we actually do not need to use non-conforming subdivisions: we show that any convex polygon can be \emph{triangulated} in such a way that the resulting triangulation has the $\myconst$-distance property for $\myconst=1$.
For possibly non-convex simple polygons we investigate several different settings that have different restrictions on the resulting decomposition.
We show that it is not always possible to create a conforming triangulation with the $\myconst$-distance property without using Steiner points, and that the number of Steiner points needed in such a triangulation cannot be bounded as a function of the complexity of the polygon $\pol$.

Instead, we show that any simple polygon~$P$ can be decomposed into $O(n_P)$ non-conforming convex quadrilaterals and triangles that have the $\myconst$-distance property for some absolute constant $\myconst>0$. The decomposition can be computed in $O(n_P \log n_P)$ time. This result is used to obtain a linear-size data structure for point location in a planar connected polygonal subdivision~$\subdiv$, such that the query time for a point $p$ in a polygon $P_i \in \subdiv$ is $O\left(\min \left(\log n, 1 + \log \frac{\area(P_i)}{\gamma_i\dtb{p}^2}\right)\right)$, where $\dtb{p}$ is the distance from the query point~$p$ to the boundary of its containing region.

Lastly we investigate a special case in which the query bound is based only on the distance of a query point to the boundary. Specifically, assuming the subdivision is contained in a square of area~1, we present a data structure that achieves a query time of $O\left(\min \left(\log n, 1 + \log \frac{1}{\dtb{p}^2}\right)\right)$ for a point $p$. The new structure is based on a depth-bounded quadtree and a worst-case optimal point-location structure, both of which can be constructed in $O(n\log n)$ time and $O(n)$ space. The more general structure presented above achieves the same bounds if we choose $\gamma_i = \area(\pol_i)$, but we believe the new structure is much simpler and may be faster in practice. As a bonus, the new structure achieves the more general bound of $O\left(\min \left(\log n, 1 + \log \frac{\area(P_i)}{\gamma_i\dtb{p}^2}\right)\right)$ for any subdivision of the unit square where $\gamma_i = O(\area(\pol_i))$.

The simpler structure also extends to three dimensions. Specifically, given a convex polyhedral subdivision contained in a unit cube with $n$ edges, we show how to construct a distance-sensitive point location structure in $O(n \log^2 n)$ time and $O(n\log n)$ space that answers a query for a point $p$ in $O(1 + \log \frac{1}{\dtb{p}^2})$ time if $\dtb{p} \geq \sqrt{3} / \sqrt[3]{n}$ and $O(\log^2 n)$ time otherwise. Note that the space requirement comes from the worst-case point location structure and not the additional octree structure that allows for distance-sensitive queries.

\section{Distance-sensitive decomposition of simple polygons}
As argued in the introduction we can use entropy-based point location structures to create a distance-sensitive point location structure by first creating a distance-sensitive decomposition for the polygons of the input subdivision. To avoid confusion we use the term \emph{polygon} for polygons of the input subdivision $\subdiv$ and $\emph{region}$ for the regions of the decomposition of a polygon. Recall that we define a distance-sensitive decomposition as follows: Let $P$ be a simple polygon with $n_P$ edges. A distance-sensitive decomposition of $P$ consists of $O(n_P)$ regions with the following properties:
\myitemize{
\item each region $\reg$ is convex and has constant complexity;
\item for some absolute constant $\myconst$ the decomposition has the \emph{$\myconst$-distance property}:
      for any point $p\in \pol$, the region $\reg$ containing~$p$ has area at least $\myconst \cdot \dtb{p}^2$, where $\dtb{p}$ is the Euclidean distance from $p$ to the boundary of $P$.
}
Given a subdivision $\subdiv$ and for each polygon $P_i \in \subdiv$ its distance-sensitive decomposition $\mathcal{P}^\mathrm{dec}_i$ we can assign each region $\reg \in \mathcal{P}^{\mathrm{dec}}_i$ a weight $\gamma_i \cdot \area(\reg) / \area(P_i)$. We then build the entropy-based structure by Arya~\etal~\cite{ammw2007} on the union of the distance-sensitive decompositions of all polygons in $\subdiv$ using the weights for the probability distribution. Now a point $p$ with distance $\dtb{p}$ to the nearest boundary of the subdivision must be contained in a region $\reg$ with weight $\gamma_i \cdot \area(\reg) / \area(P_i) \geq \myconst \cdot \gamma_i \cdot \dtb{p}^2 / \area(P_i)$. It follows that the query time for $p$ is
\[
O\left(\min \left(\log n, 1 + \log \frac{\area(P_i)}{\gamma_i\dtb{p}^2}\right)\right).
\]
So once we have a distance-sensitive decomposition it is easy to construct a distance-sensitive point location structure.

\begin{theorem}\label{thm:decomp-pl}
Let $\subdiv$ be a subdivision, where for each polygon $P_i \in \subdiv$ we are given a distance-sensitive decomposition $\mathcal{P}_i^\mathrm{dec}$. Then we can construct in $O(n\log n)$ expected time a point location for $\subdiv$ such that, for any query point $p$, the query time is $O\left(\min \left(\log n, 1 + \log \frac{\area(P_i)}{\gamma_i\dtb{p}^2}\right)\right)$, where $\dtb{p}$ is the distance from $p$ to the boundary of the region containing~$p$.
\end{theorem}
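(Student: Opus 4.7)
The plan is to combine all given decompositions into a single refined subdivision of constant-complexity convex regions, attach suitable probabilities, and feed the result to the entropy-based point-location structure of Arya, Malamatos, and Mount (and its variants). The paragraph preceding the theorem already outlines the construction; the main task is to verify that this construction satisfies all preconditions of the entropy-based structure and that the query bound works out.

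First I would form the refined subdivision $\subdiv^\mathrm{dec} := \bigcup_i \mathcal{P}_i^\mathrm{dec}$. Since each $P_i$ is broken into $O(n_{P_i})$ convex constant-complexity regions, $\subdiv^\mathrm{dec}$ has $O(n)$ regions in total, each satisfying the input requirements of~\cite{ammw2007}. I would then assign to every region $\reg \in \mathcal{P}_i^\mathrm{dec}$ the weight $\weight(\reg) := \gamma_i \cdot \area(\reg)/\area(P_i)$; these are nonnegative and sum to $\sum_i \gamma_i = 1$, so they form a valid probability distribution on the regions of $\subdiv^\mathrm{dec}$.

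Next I would invoke the entropy-based structure of Arya, Malamatos, and Mount on $\subdiv^\mathrm{dec}$ with this weight assignment. That structure is built in $O(n \log n)$ expected time, uses $O(n)$ space, and answers a query for a point in a region $\reg$ in time $O(1 + \min(\log n, \log(1/\weight(\reg))))$. Combining this with the $\myconst$-distance property: a query point $p \in \reg \subseteq P_i$ lies in a region whose area is at least $\myconst \cdot \dtb{p}^2$, hence
\[
\weight(\reg) \;\geq\; \frac{\myconst \cdot \gamma_i \cdot \dtb{p}^2}{\area(P_i)}.
\]
Substituting into the query-time bound and absorbing the absolute constant $\myconst$ into the big-$O$ yields exactly $O(\min(\log n, 1 + \log(\area(P_i)/(\gamma_i \dtb{p}^2))))$, as required.

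There is no real obstacle here: the argument is essentially a reduction. The only items that need a careful sentence are (i) that constant-complexity convex regions in $\subdiv^\mathrm{dec}$ genuinely match the input format of the entropy-based structure, which follows from the first bullet of the distance-sensitive decomposition definition, and (ii) that the weights form a probability distribution, which is immediate from telescoping the areas inside each $P_i$. Once these are observed, the theorem follows directly from plugging the $\myconst$-distance bound into the query-time guarantee of~\cite{ammw2007}.
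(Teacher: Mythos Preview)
Your proposal is correct and follows essentially the same approach as the paper: assign each region $\reg\in\mathcal{P}_i^{\mathrm{dec}}$ the weight $\gamma_i\cdot\area(\reg)/\area(P_i)$, build the entropy-based structure of Arya~\etal on the refined subdivision, and combine the per-region query bound with the $\myconst$-distance property. The paper additionally remarks that the possibly non-conforming nature of the decomposition is harmless because Arya~\etal's structure is built by randomized incremental insertion of maximal segments, which is worth noting when you check item~(i).
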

Note that the expectation in the construction time has nothing to do with 
the probabilities~$\gamma_i$, but it is because Arya~\etal use randomized incremental construction to build their data structure.
Also note that the distance-sensitive decomposition may be non-conforming, that is, the boundary-edges of a region may contain many interior vertices that are not counted towards its complexity. Indeed, since Arya~\etal use randomized incremental insertion of maximal segments to build their structure, it is not a problem if the decomposition is non-conforming. In the remainder of this section we focus on constructing distance-sensitive decompositions, first for convex polygons and then for arbitrary simple polygons.

\subsection{Convex polygons}\label{s:convex}
As a warm-up exercise, we start with the problem of decomposing a convex polygon~$P$ with $n_P$ vertices so that the decomposition has the $\myconst$-distance property for $\myconst = 1$. For this case the decomposition is actually a triangulation.

Our algorithm is quite simple. First we split~$P$ by adding a diagonal
between the vertices defining the diameter of~$P$. We further decompose
each of the two resulting subpolygons using a recursive algorithm, which we describe next.
We call the edges of the input polygon~$P$ \emph{polygon edges} and the edges created by the subdivision process \emph{subdivision edges}. The boundary of each subpolygon we recurse on
consists of one subdivision edge and a convex chain of polygon edges, where the angles between the chain and the subdivision edge are acute. Let $Q$ be such a subpolygon and $e$ the corresponding subdivision edge. We construct the largest area triangle $T$ contained in $Q$ that has $e$ as an edge by finding the vertex~$v$ on the convex chain that is farthest from~$e$.
Because the chain is convex this vertex can be found in $O(\log n_Q)$ time, where $n_Q$ is the number of vertices of $Q$.
\begin{theorem}
\label{thm:convex-triangulation}
For any convex polygon $P$ with $n_P$ vertices we can compute in $O(n_P \log n_P)$ time a triangulation that has the $1$-distance property.
\end{theorem}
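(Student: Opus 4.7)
The proof has two components: verifying the $1$-distance property of the triangulation, and bounding the running time.

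For the $1$-distance property, I plan to show that every apex triangle $T = avb$ produced by the algorithm satisfies $\area(T) \geq \dtb{p}^2$ for all $p \in T$. Fix coordinates so that the base $e = ab$ lies on the $x$-axis with $a = (0,0)$ and $b = (L,0)$, and the current subpolygon $Q$ (bounded by $e$ and a convex chain ending at the apex $v = (x_v, H)$) sits in the upper half-plane. Since $P$ is convex and $e$ is a chord of $P$, the portion of $P$ above $e$ is exactly $Q$. For $p = (x_p, y_p) \in T$, let $d = \dtb{p}$; the closed disk of radius $d$ around $p$ lies in $\overline{P}$, giving two bounds on $d$. First, the topmost disk point $(x_p, y_p + d)$ lies in $P$ and therefore below the chain of $Q$ (whose maximum height is $H$), so $d \leq H - y_p \leq H$. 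Second, the horizontal chord of the disk at height $y_p$, with endpoints $(x_p \pm d, y_p)$, lies in $Q$, so $2d \leq W_Q(y_p)$, where $W_Q(y)$ denotes the horizontal width of $Q$ at height $y$. I will show $W_Q(y) \leq L$ for all $y \in [0, H]$, giving $d \leq L/2$. A short case analysis on whether $H \leq L/2$ then yields $\area(T) = LH/2 \geq \min(H, L/2)^2 \geq d^2$.

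The bound $W_Q(y) \leq L$ rests on the invariant that the chain of $Q$ makes strictly acute angles with $e$ at both endpoints $a$ and $b$; together with chain convexity, this forces the chain to be a concave graph over $[0, L]$ with maximum $H$ at $x_v$, so the left and right intersections $x_L(y)$ and $x_R(y)$ are respectively non-decreasing from $0$ and non-increasing to $L$, and $W_Q(y) = x_R(y) - x_L(y) \leq L$. The invariant holds at the top level because $e$ is the polygon diameter: were the angle at $a$ between $e$ and the first chain edge at least $\pi/2$, the chain neighbor of $a$ would be strictly farther from $b$ than $a$ is, contradicting maximality of $|ab|$. In the inductive step, after forming $T = avb$ and recursing on the subpolygon bounded by $av$ and the subchain from $a$ to $v$, the new endpoint angles at $a$ and $v$ are both acute by direct dot-product calculations: the inner product $\langle v - a,\ a_1 - a \rangle = x_v x_{a_1} + H y_{a_1} > 0$, and $\langle a - v,\ v_{-1} - v \rangle = x_v(x_v - x_{v_{-1}}) + H(H - y_{v_{-1}}) > 0$, using that $v$ is the unique highest chain vertex and that $v_{-1}$ lies strictly below and to the left of $v$.

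For the running time, the initial diameter of $P$ is computed in $O(n_P)$ time by rotating calipers on the convex polygon. Each subsequent recursive call locates $v$ in $O(\log n_Q)$ time by binary search on the convex chain, exploiting unimodality of the distance-to-$e$ function along a convex chain. Every call creates a triangle with a distinct polygon vertex as its apex, so the number of calls is $O(n_P)$, and the total running time is $O(n_P \log n_P)$. I expect the main difficulty to be the propagation of the acute-angle invariant through recursion; once that is in place, the width bound $W_Q \leq L$ follows from concavity of the chain, and the inscribed-disk argument gives the area inequality cleanly.
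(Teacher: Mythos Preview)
Your proposal is correct and follows essentially the same route as the paper: both arguments establish that each subpolygon $Q$ fits in an $\ell \times h$ box (with $\ell = |e|$, $h = \dist(v,e)$) via an acute-angle invariant at the endpoints of the base edge, then bound $\dtb{p} \le \min(h,\ell/2)$ using the inscribed disk and conclude $\area(T) = h\ell/2 \ge \dtb{p}^2$. The only difference is in how the invariant is propagated through the recursion: the paper maintains the slightly stronger statement that the base edge is the \emph{diameter} of $Q$ (dropping a perpendicular from $v$ to $e$ splits the bounding rectangle into two sub-rectangles whose diagonals are the new base edges, so they are again diameters), from which the acute angles are immediate, whereas you verify the acute-angle invariant directly by dot-product computations in the chosen coordinates.
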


\begin{figure}
\centering
\includegraphics{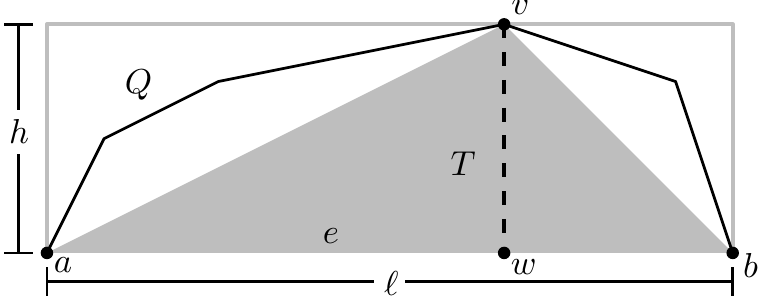}
\caption{A triangle in polygon $Q$ is defined by the subdivision edge $e$ and the point $q$ furthest away from~$e$.}
\label{fig:convex-triangle}
\end{figure}

\noindent
\begin{proof}
Consider the algorithm described above.
We first show that the two angles between the subdivision edge $e$ and the convex chain inside a subpolygon $Q$ are acute by showing that $|e|$ is the diameter of $Q$. For the first two subpolygons, created by cutting the convex polygon across the diagonal the length of the subdivision edge is the diameter of the subpolygon by definition. Now consider a subpolygon $Q$ with subdivision edge $e$ and triangle $T$ formed by $e = (a,b)$ and the furthest point $v \in Q$ from $e$, see Figure~\ref{fig:convex-triangle}. This creates up to two new subpolygons $Q_1$ and $Q_2$ with $e_1 = (a,v)$ and $e_2 = (v,b)$ as subdivision edges. Since the angles between $e$ and the convex chain are acute it follows that $Q$ is contained in a rectangle with side length $e$ and height $h = \dist(v,e)$. To prove that $|e_1|$ and $|e_2|$ are the diameters for $Q_1$ and $Q_2$ respectively consider a point $w \in e$ that is closest to $v$. The edge $(v,w)$ divides the rectangle into two rectangles $R_1$ containing $Q_1$ and $R_2$ containing $Q_2$. The edges $e_1$ and $e_2$ are the diameters of these rectangles and it follows that they are also the diameters of~$Q_1$ and~$Q_2$.

Now consider a subpolygon $Q$ with base edge $e$ and a furthest point $v\in Q$ from $e$. Since $|e|$ is the diameter $Q$, the angles at $e$'s endpoints are acute and $Q$ must be contained in an $\ell \times h$ rectangle where $\ell = |e|$ and $h = \dist(v, e)$. It follows that for any point $p \in T$ we have
\[
\dtb{p}^2 \leq \min(h, \ell/2)^2 \leq h\ell/2 = \area(T).
\]

The diameter of a convex polygon can be computed in $O(n_P)$ time, and the creation of each triangle takes $O(\log n_P)$ time. Since there are $n_P-2$ triangles it follows that the algorithm takes $O(n_P \log n_P)$ time.
\end{proof}

Combining this result with Theorem~\ref{thm:decomp-pl} we obtain the following corollary.

\begin{cor}
Let $\subdiv$ denote a convex planar polygonal subdivision with $O(n)$ vertices and let $\gamma_i$ for each $P_i \in \subdiv$ denote the probability that a query point lies in $P_i$. We can construct in $O(n\log n)$ expected time a point location structure that uses $O(n)$ space and answers a query with a point $p$ in $O\left(\min \left(\log n, 1 + \log \frac{\area(P_i)}{\gamma_i\dtb{p}^2}\right)\right)$ time, where $\dtb{p}$ denotes the Euclidean distance from $p$ to the nearest point on any edge of $\subdiv$.
\end{cor}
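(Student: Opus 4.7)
The plan is to obtain the corollary directly by plugging Theorem~\ref{thm:convex-triangulation} into Theorem~\ref{thm:decomp-pl}. Specifically, since every polygon $P_i \in \subdiv$ is convex, Theorem~\ref{thm:convex-triangulation} already supplies a triangulation that satisfies the $1$-distance property. A triangulation with the $1$-distance property is, by definition, a distance-sensitive decomposition: it consists of convex constant-complexity regions (triangles), and the region containing any point $p \in P_i$ has area at least $\dtb{p}^2$. So each $P_i$ comes equipped with a decomposition $\mathcal{P}_i^{\mathrm{dec}}$ of the form required by Theorem~\ref{thm:decomp-pl}.

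First I would construct these decompositions one polygon at a time. By Theorem~\ref{thm:convex-triangulation}, decomposing $P_i$ costs $O(n_{P_i} \log n_{P_i})$ time and produces $O(n_{P_i})$ triangles. Summing over all polygons of $\subdiv$ and using $\sum_i n_{P_i} = O(n)$, the total preprocessing time for the decomposition step is $O(n \log n)$, and the total number of regions produced across all polygons is $O(n)$.

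Next I would feed the union of these triangulations into the construction of Theorem~\ref{thm:decomp-pl}, weighting each triangle $R \subseteq P_i$ by $\gamma_i \cdot \area(R)/\area(P_i)$ as described in the preamble to that theorem. This produces, in $O(n \log n)$ expected time (dominated by Arya~\etal's randomized incremental construction) and $O(n)$ space (inherited from the entropy-based backbone), a point-location structure answering queries in the promised time
\[
O\!\left(\min\!\left(\log n,\; 1 + \log \frac{\area(P_i)}{\gamma_i \dtb{p}^2}\right)\right).
\]
Here I note two small things worth checking: the overall time of $O(n\log n)$ absorbs the decomposition cost, and the $\dtb{p}$ in the query bound (originally defined as the distance from $p$ to the boundary of its containing polygon $P_i$) coincides with the distance from $p$ to the nearest edge of $\subdiv$, since $p \in P_i$ and the boundary of $P_i$ is a subset of the edges of $\subdiv$.

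There is no real obstacle: the work of proving that such a structure exists has been done in Theorems~\ref{thm:convex-triangulation} and~\ref{thm:decomp-pl}, and what remains is only verifying that the hypotheses of Theorem~\ref{thm:decomp-pl} are met and that the resource bounds add up. The only point requiring a moment's thought is the identification of $\dtb{p}$ in the two statements, but this is immediate from convexity and the definition of the subdivision.
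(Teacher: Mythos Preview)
Your proposal is correct and matches the paper's approach exactly: the paper states the corollary with the single sentence ``Combining this result with Theorem~\ref{thm:decomp-pl} we obtain the following corollary,'' and you have simply spelled out that combination, verifying that the decomposition cost sums to $O(n\log n)$ and that the two notions of $\dtb{p}$ agree. (One tiny remark: the identification of $\dtb{p}$ does not actually rely on convexity---for $p\in P_i$ the nearest edge of $\subdiv$ is always on $\partial P_i$---but this does not affect anything.)
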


\subsection{Arbitrary polygons}\label{ss:arbitrary}
We now consider non-convex polygons. We wish to compute a decomposition
of a simple polygon $P$ into constant-complexity regions that have the
$\myconst$-distance property.
This is not always possible with a triangulation. Consider the polygon in Figure~\ref{fig:many-steiner}, where the width of the middle column coming up from the bottom edge is $\eps$. Any triangulation of the polygon in Figure~\ref{fig:many-steiner}a must include
triangle $uvw$ or $uvz$, and when $\eps$ tends to zero
the $\myconst$-distance property is violated for points in the middle of these triangles.
A Steiner triangulation with the $\myconst$-distance property always
exists: the quadtree-based mesh of Bern~\etal~\cite{beg1994} can be adapted
to have the $\myconst$-distance property---the (small) adaptations are required only around acute angles. However, the number of Steiner points
and, hence, the number of triangles
cannot be bounded as a function of the number of vertices of $P$.
Next we show that this is necessarily so.

\begin{theorem}
\label{th:many-steiner}
For any constant $\myconst>0$ and any $m>0$, there is a simple polygon~$P$ with eight vertices
such that any Steiner triangulation of $P$ with the $\myconst$-distance property
uses at least $m$ Steiner points.
\end{theorem}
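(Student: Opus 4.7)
The plan is to exhibit, for each $\eps > 0$, an explicit 8-vertex polygon $P_\eps$ in which any Steiner triangulation with the $\myconst$-distance property uses $\Omega(\log(1/\eps))$ Steiner points; given $m$, one then picks $\eps$ small enough. Following the figure, I take $P_\eps$ to be a ``T''-shape: a unit square $[0,1]\times[1,2]$ with a thin rectangular column $[(1-\eps)/2,\,(1+\eps)/2]\times[0,1]$ of width $\eps$ attached to the middle of its bottom. This has exactly $8$ vertices, two of which are the reflex vertices $u=((1-\eps)/2,1)$ and $v=((1+\eps)/2,1)$ at the mouth of the column, with $|uv|=\eps$.

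Consider the family of probes $p_k = (1/2,\,1+2^k\eps)$ for $k = 1,\ldots,K := \lfloor\log_2(1/(4\eps))\rfloor$, placed along the axis of symmetry above the column mouth at geometrically increasing heights. The closest boundary point of $P_\eps$ to each $p_k$ is either $u$ or $v$, so $\dtb{p_k} = \sqrt{(\eps/2)^2+(2^k\eps)^2} \geq 2^k\eps$. Hence in any Steiner triangulation with the $\myconst$-distance property, the triangle $T_k$ containing $p_k$ must have $\area(T_k) \geq \myconst\cdot 4^k\eps^2$. Because $P_\eps$ has only $8$ fixed vertices, Euler's formula converts a lower bound of $\Omega(\log(1/\eps))$ on the number of distinct triangles in $\{T_1,\ldots,T_K\}$ into a matching lower bound on the number of Steiner points, which exceeds any prescribed $m$ for small enough $\eps$.

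The heart of the proof is the geometric counting lemma $|\{T_1,\ldots,T_K\}| = \Omega(K)$. By convexity, if $T_a = T_b$ for $a<b$, this single triangle $T$ contains the vertical chord of length $(2^b-2^a)\eps$ joining $p_a$ and $p_b$ while still fitting inside a polygon of horizontal width $1$. I plan to combine the area lower bound $\area(T) \geq \myconst\cdot 4^b\eps^2$ with the width constraint to derive a relation of the form $2^b-2^a \geq c\cdot\myconst\cdot 4^b\eps$ for a universal constant $c>0$; this pins each probe $p_k$ to its own distinct triangle whenever $2^k\eps\lesssim 1/\myconst$, giving $\Omega(\log(1/(\myconst\eps)))$ distinct triangles.

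The main technical obstacle is that the naive inequality ``area $\leq$ width $\times$ chord length,'' which would make the above step immediate, fails in general for triangles containing the chord as an \emph{interior} segment, since the triangle may extend arbitrarily far perpendicular to the chord without enlarging the chord itself. To deal with this I plan to distinguish two cases according to how $T$ sits relative to the column mouth. If $T$ lies entirely in the wide unit square, I split $T$ along the chord's supporting line into a single-vertex part (a sub-triangle of area $\tfrac12\ell\, w_{\mathrm{side}}$) and a two-vertex part, and control the latter using the polygon width together with the height restriction imposed by being inside the unit square. If $T$ dips into the column, its portion below $y=1$ must pass through the $\eps$-wide mouth $uv$, pinching its horizontal extent there and contributing only $O(\eps)$ to the total area, reducing again to the wide-part case for the remainder. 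Running this case analysis on the central probes---and, if it proves cleaner, on the analogous family $q_k = (u_x,1+2^k\eps)$ placed directly above the reflex vertex $u$, for which $\dtb{q_k}$ is pinned by $u$ itself---should yield the required $\Omega(\log(1/\eps))$ bound on the number of distinct triangles and hence on the number of Steiner points.
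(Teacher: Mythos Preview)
Your choice of polygon matches the paper's, and the reduction from ``$\Omega(K)$ distinct triangles'' to ``$\Omega(K)$ Steiner points'' via Euler's formula is fine. The gap is in the counting lemma itself, and it is not just a technical obstacle to be patched by case analysis: the direction of your key inequality is wrong.

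The $\myconst$-distance property gives you a \emph{lower} bound $\area(T_k)\geq\myconst\cdot 4^k\eps^2$; it forces triangles to be large. But large triangles are precisely what allows many probes to share a single triangle. Nothing in your setup prevents one triangle $T$---say one occupying a constant fraction of the upper unit square---from containing all of $p_1,\ldots,p_K$; such a $T$ trivially satisfies $\area(T)\geq\myconst\cdot 4^K\eps^2$ since $4^K\eps^2\leq 1/16$. To obtain ``$T_a=T_b\Rightarrow 2^b-2^a\geq c\,\myconst\cdot 4^b\eps$'' you would need an \emph{upper} bound $\area(T)\leq C\cdot(\text{chord length})\cdot(\text{width})$ with the chord being $p_ap_b$; but the full vertical cross-section of $T$ at $x=1/2$ can be arbitrarily longer than the sub-segment $p_ap_b$, so no such bound holds. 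Your proposed case analysis bounds $\area(T)$ in terms of that full cross-section $\ell'$, not in terms of $(2^b-2^a)\eps$, and $\ell'$ is not controlled by which probes lie in $T$. Even granting the inequality, it would not force distinctness: for $a<b$ with $2^b\eps\lesssim 1/\myconst$ the relation $2^b-2^a\geq c\,\myconst\cdot 4^b\eps$ is \emph{satisfied}, not violated.

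The paper's argument is structurally different and runs the inequality in the usable direction. It does not probe points; it walks through the triangulation. Starting from the (sub-)edge $uv$, which must bound some triangle $T_0$, it passes to adjacent triangles $T_0,T_1,\ldots$ across shared edges $e_i$, tracking both $|e_i|$ and its slope $\slope_i$. As long as $e_i$ is short and not too steep, every point of $T_i$ stays close to the column walls or to $uv$, so $\dtb{\cdot}$ is small on $T_i$; the $\myconst$-distance property then \emph{upper}-bounds $\area(T_i)$ and hence $|e_{i+1}|$. The slope control (proved inductively: $\slope_i\leq(2^{i+1}-2)/\myconst$) is essential, since a nearly vertical $e_i$ could reach points far from the mouth. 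This edge-propagation with slope control is the missing idea: it converts the area lower bound into a bound on the \emph{growth rate} of the triangulation near the pinch, which is what forces $\Omega(\log(1/\eps))$ triangles before the walk can leave the unit square around $uv$.
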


\begin{figure}
\centering
\includegraphics{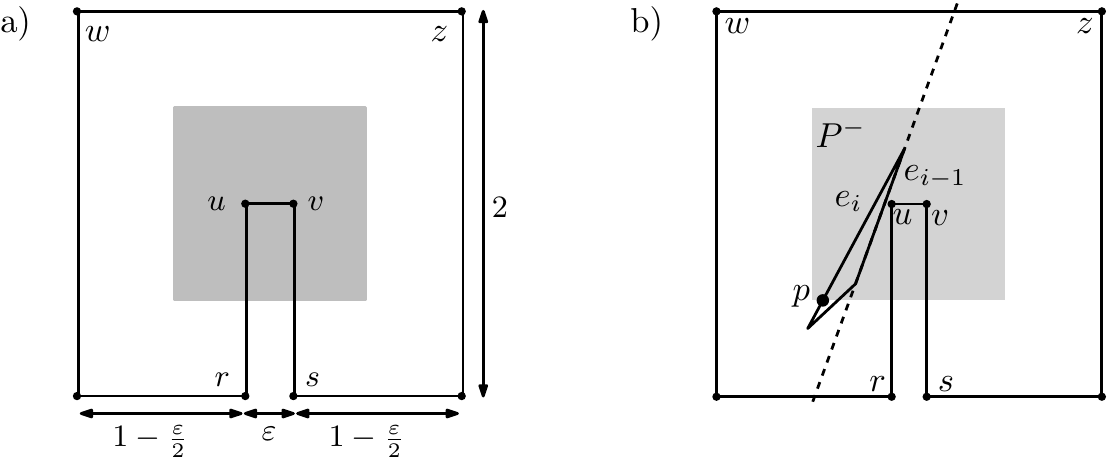}
\caption{a) Any triangulation of $P$ with the $\myconst$-distance property requires many Steiner points. b) Triangle $T_{i-1}$ intersects the boundary of $\halfbox$ in $p$.}
\label{fig:many-steiner}
\end{figure}

\noindent
\begin{proof}
Let $P$ be the polygon shown in Figure~\ref{fig:many-steiner}a.
Consider a Steiner triangulation $\triset$ of $P$ with the $\myconst$-distance property.
Let $T_0$ be the triangle in $\triset$ that has $uv$ as an edge.\footnote{The edge $uv$ can contain Steiner vertices in its interior, as the only requirement we have for the Steiner triangulation is that any two triangles either meet in a single vertex, along a complete edge, or not at all. When $uv$ contains Steiner vertices, we can replace $uv$ by any subedge of $uv$, and the argument still holds.}
If $\eps$ is very small then the other two edges of $T_0$ cannot be very long
either, otherwise the $\myconst$-distance property is violated inside~$T_0$.
This in turn implies that the neighboring triangles of $T_0$ cannot be very large.
The idea is to repeat this argument to show that many triangles are needed
to cover~$P$.

Specifically, we define a sequence
of triangles $T_0, T_1, T_2, \ldots$, as follows.
Suppose we are given a triangle $T_i$ and an edge $e_i$ bounding $T_i$ from below.
(For $i=0$, we have $e_i = uv$.) Consider the other two edges of $T_i$.
We select one of these two edges as $e_{i+1}$ and define
$T_{i+1}$ as the triangle directly above~$e_{i+1}$. We select~$e_{i+1}$
as follows. If only one of the edges bounds $T_i$ from above, then this edge
is selected. If both edges bound $T_i$ from above, then we select
the edge with the smaller absolute slope. This selection guarantees that for every edge~$e_i$ at least one endpoint is above~$e_0$.

Our goal is now to prove that the size of the triangles $T_0,T_1,\ldots$ does not increase
too rapidly---more precisely, that $T_{i+1}$ cannot be arbitrarily larger
than~$T_i$. This requires an invariant on the length of the edges $e_i$,
but also on their absolute slope. We denote the absolute slope of~$e_i$ by~$\sigma_i$.
Thus $\slope_i = |e_i|_y / |e_i|_x$, where $|e_i|_x$ and $|e_i|_y$ denote the lengths of the projection of~$e_i$ on the $x$- and $y$-axis.
Let $\halfbox$ denote the square with edge length~1 centered at the midpoint of~$uv$.
In Figure~\ref{fig:many-steiner}a this square is shaded. Our argument
will use the fact that for $T_i$ inside $\halfbox$ the nearest
boundary point for any $p\in T_i$ lies on $uv$, $ur$, or $vs$.
We show that both the slope and length of edge $e_i$ are bounded as a function of $i$,
and that $e_i$ remains inside $\halfbox$, until $\slope_i \cdot |e_i|$ is large enough.
More precisely, we can prove that as long as
$\max(4, \slope_i^2) \cdot |e_i| < \frac{\myconst}{8\sqrt{2}}$
the following three properties hold, where (i) and (ii) are needed to
prove~(iii):
\myEnumerate{
\item[(i)] edge $e_i$ is contained in $\halfbox$;
\item[(ii)] the slope $\sigma_i$ of $e_i$ satisfies $\sigma_i \leq  (2^{i+1}-2) / \myconst$;
\item[(iii)] edge $e_i$ has length at most $8\eps \cdot 2^{(i+1)(i+7)} / (\myconst^{3i})$.
}
These properties can be proven using induction, where the proof for (ii) requires (i) and the proof for (iii) requires (i) and (ii). It is easy to see that (i), (ii) and (iii) are true for $e_0$ and the step cases are given in Lemmas~\ref{lem:prop-i},~\ref{lem:prop-ii}~and~\ref{lem:prop-iii}, respectively.
It follows from property~(iii) that we can always choose $\eps$ small enough that we need at least $m$ Steiner points before $T_i$ can leave the square~$\halfbox$.
\end{proof}

\begin{lemma}
\label{lem:prop-i}
If for $e_j$ with $0\leq j < i$ we have $\max(4, \slope_j^2) \cdot |e_j| < \frac{\myconst}{8\sqrt{2}}$ and $e_j$ is contained in $\halfbox$, then $e_i$ is contained in $\halfbox$.
\end{lemma}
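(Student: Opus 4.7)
I would attack the lemma directly, without further induction: its hypothesis already encodes all the inductive information needed about the preceding edges. Because $\halfbox$ is convex, showing $e_i \subset \halfbox$ reduces to showing that both endpoints of $e_i$ lie in $\halfbox$. One endpoint of $e_i$ coincides with an endpoint of $e_{i-1}$ and so lies in $\halfbox$ by the hypothesis applied at $j=i-1$; the other endpoint is the apex $A$ of $T_{i-1}$ (the vertex opposite $e_{i-1}$), and locating $A$ is the heart of the proof.

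The plan is to argue by contradiction: suppose $A \notin \halfbox$. Then the segment $e_i$, which runs from an endpoint of $e_{i-1}\subset\halfbox$ out to $A$, must cross $\partial\halfbox$ at some point $p$ (the geometric situation depicted in Figure~\ref{fig:many-steiner}b). Pick a point $q$ in the interior of $T_{i-1}$ arbitrarily close to $p$; then $q\in T_{i-1}\cap\halfbox$. The observation used throughout the proof of Theorem~\ref{th:many-steiner} tells us that the nearest polygon-boundary point to any such $q$ lies on one of $uv$, $ur$, $vs$, and in the natural coordinate system these three edges all sit on the line $y=0$. Consequently $\dtb{q}$ is bounded below by the $y$-coordinate of $q$, which may be taken arbitrarily close to $y(p)$.

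Let $h$ denote the perpendicular height of $A$ above the line through $e_{i-1}$. Combining the $\myconst$-distance property $\myconst\,\dtb{q}^2 \le \area(T_{i-1}) = \tfrac12\,|e_{i-1}|\,h$ with the lower bound on $\dtb{q}$ yields
\[
h \;\ge\; \frac{2\myconst\,y(p)^2}{|e_{i-1}|}.
\]
Plugging in the bound $|e_{i-1}| < \myconst/(32\sqrt 2)$, which follows from $\max(4,\slope_{i-1}^2)\,|e_{i-1}|<\myconst/(8\sqrt 2)$, forces $h$ to be impossibly large: in the worst case of a top-exit ($y(p)=1/2$) one obtains $h > 16\sqrt 2$, which is incompatible with the bounded extent of the polygon $P$ of Figure~\ref{fig:many-steiner}a, since the apex $A$ must in particular lie inside $P$.

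The main obstacle I expect is the subcase where $e_i$ exits $\halfbox$ through a vertical side rather than the top, so that $y(p)$ can be small and the lower bound on $h$ just derived becomes too weak. There one has to combine that lower bound on $h$ with an upper bound on the horizontal displacement of $A$ from the endpoints of $e_{i-1}$, which is roughly $h\,\slope_{i-1}/\sqrt{1+\slope_{i-1}^2}$, and push this to contradict $|x(A)|>1/2$. The $\slope_{i-1}^2$ term in the hypothesis $\max(4,\slope_{i-1}^2)\,|e_{i-1}|<\myconst/(8\sqrt 2)$ is precisely calibrated so that both the vertical component and the slope-induced horizontal drift of $A$ are controlled at once, and the side-exit case then yields the same kind of contradiction as the top-exit case.
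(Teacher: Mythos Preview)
Your overall plan---argue by contradiction, take an exit point $p\in\partial\halfbox\cap T_{i-1}$, and pit a lower bound on $\dtb p$ against an upper bound on $\area(T_{i-1})$---is exactly what the paper does. The difficulty is that your geometric picture of the boundary near $uv$ is off, and this derails your case analysis.

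The edges $ur$ and $vs$ are the \emph{vertical} sides of the narrow notch; only $uv$ lies on the line $y=0$. Hence the inequality $\dtb q\ge y(q)$ holds only for points with $y(q)>0$, and it says nothing once the exit happens at or below the level of $uv$. More importantly, the case you flag as the hard one (exit through a vertical side of $\halfbox$) is in fact easy: any point on the left, right, or top side of $\halfbox$ is at distance at least $(1-\eps)/2\ge 1/4$ from each of $uv,ur,vs$, so your ``top-exit'' computation already covers it. The genuinely delicate case is the \emph{bottom} exit, $y(p)=-1/2$, where $p$ can sit arbitrarily close (horizontally) to $ur$ or $vs$ and $\dtb p$ is \emph{not} bounded below by an absolute constant. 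Your proposal never treats this case.

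Your sketched fix for what you thought was the hard case also does not work as stated: the quantity $h\,\slope_{i-1}/\sqrt{1+\slope_{i-1}^2}$ is the horizontal component of a step of length $h$ perpendicular to $e_{i-1}$, but the apex $A$ can lie anywhere along the line parallel to $e_{i-1}$ at distance $h$, so this does not bound $|x(A)|$. And even if it did, $|x(A)|>1/2$ is not a contradiction, since $P$ itself extends to $|x|=1$. The paper avoids all of this by bounding the area crudely as $\area(T_{i-1})\le |e_{i-1}|\cdot|e_i|\le |e_{i-1}|\cdot 2\sqrt2$ (using only that $e_i\subset P$), and then, for the bottom-exit case, observing that because $e_{i-1}\subset\halfbox$ has one endpoint above $uv$ and does not cross $uv$, the exit point at $y=-1/2$ must lie at horizontal distance at least $1/(2\slope_{i-1})$ from the nearer notch side. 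That gives $\dtb p\ge 1/(2\max(2,\slope_{i-1}))$ in all cases and the hypothesis $\max(4,\slope_{i-1}^2)\,|e_{i-1}|<\myconst/(8\sqrt2)$ finishes the contradiction in one line.
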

\begin{proof}
We assume for a contradiction that $e_i$ extends outside of $\halfbox$ and show that if this is the case, then $T_{i-1}$ does not have the $\myconst$-distance property for the given $\myconst$. The area of $T_{i-1}$ is upper bounded by $|e_{i-1}| \cdot |e_{i}| \leq |e_{i-1}| \cdot 2\sqrt{2}$. Since $e_i$ extends outside of $\halfbox$ and $e_{i-1}$ is inside it there must be a point $p \in T_{i-1}$ that is on the boundary of $\halfbox$. If $p$ is on the left, top or right edge of $\halfbox$ then $\dtb{p} \geq (1-\eps)/2 \geq 1/4$. If $p$ is on the bottom edge of $\halfbox$ we can use the slope of $e_{i-1}$ and the fact that its top endpoint is above $e_0$ to bound the distance from $p$ to the boundary of $P$. Without loss of generality assume that $p$ is to the left of $ur$. Since one endpoint of $e_{i-1}$ is above $e_0$ and $e_{i-1}$ cannot intersect $e_0$ the distance from $p$ to $ur$ (the nearest boundary edge) is at least $1/(2\slope_{i-1})$, see also Figure~\ref{fig:many-steiner}b. We get that $\dtb{p} \geq 1/(2\max(2, \slope_{i-1}))$. This would imply that
\begin{align*}
\area(T_i) \leq |e_{i-1}| \cdot 2\sqrt{2}
< \frac{\myconst}{8\sqrt{2}\max(4,\slope_{i-1}^2)} \cdot 2\sqrt{2}
\leq \myconst \cdot \dtb{p}^2,
\end{align*}
contradicting that $T_i$ has the $\myconst$-distance property. Hence, we can conclude that $e_{i}$ must be contained in $\halfbox$.
\end{proof}

\begin{lemma}
\label{lem:prop-ii}
If for all $0 \leq j < i$ the edge $e_j$ is inside $\halfbox$ and $\max(4, \slope_j^2) \cdot |e_j| < \frac{\myconst}{8\sqrt{2}}$, then $\slope_i \leq (2^{i+1} - 2) /\myconst$.
\end{lemma}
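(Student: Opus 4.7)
The plan is to prove Lemma~\ref{lem:prop-ii} by induction on $i$. The base case $i=0$ is immediate: since $e_0 = uv$ is horizontal, $\slope_0 = 0 = (2^1-2)/\myconst$. In the inductive step, the hypothesis at index $i-1$ gives $\slope_{i-1} \leq (2^i - 2)/\myconst$, and the standing assumptions of the lemma give $e_{i-1} \subset \halfbox$ and $\max(4, \slope_{i-1}^2) \cdot |e_{i-1}| < \myconst/(8\sqrt{2})$. The intermediate claim I aim to prove is the recurrence
\[
\slope_i \;\leq\; 2\,\slope_{i-1} + \tfrac{2}{\myconst},
\]
which combined with the inductive hypothesis yields $\slope_i \leq 2(2^i - 2)/\myconst + 2/\myconst = (2^{i+1} - 2)/\myconst$, completing the step.

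To prove the recurrence I would argue by contradiction. Assume that both non-base edges $AV$ and $BV$ of the triangle $T_{i-1}$ (with $A, B$ the endpoints of $e_{i-1}$ and $V$ the apex) have absolute slope strictly greater than $2\slope_{i-1} + 2/\myconst$, so that $\slope_i$, the minimum of the two, also exceeds this bound. Pass to rotated coordinates in which $e_{i-1}$ is horizontal of length $L = |e_{i-1}|$ and $V = (v_x, v_y)$ with $v_y > 0$. The slopes of $AV$ and $BV$ in the original frame can be read off from their rotated counterparts via the tangent-addition formula; translating both slope lower bounds through that formula yields a lower bound on the perpendicular height $v_y$ of the shape $v_y \geq c\,L/\myconst$ for an absolute constant $c>0$, after a short case analysis depending on whether the foot of the altitude from $V$ lies inside the segment $e_{i-1}$ or outside it.

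The next step is to contradict the $\myconst$-distance property of $T_{i-1}$ by exhibiting a point $p \in T_{i-1}$ with $\myconst \, \dtb{p}^2 > \area(T_{i-1}) = \tfrac12 L\, v_y$. A natural candidate is a point on the altitude from $V$ at perpendicular distance roughly $v_y/2$ above $e_{i-1}$. The containment $e_{i-1}\subset \halfbox$, together with the inequality $\max(4, \slope_{i-1}^2) |e_{i-1}| < \myconst/(8\sqrt{2})$, ensure that $p \in \halfbox$. The only candidate nearest-boundary segments are then $uv$, $ur$, and $vs$. Using the slope bound on $e_{i-1}$ and the invariant that one endpoint of $e_{i-1}$ lies above $e_0 = uv$ (a consequence of the selection rule for successive edges, already exploited in the proof of Lemma~\ref{lem:prop-i}), each of these three distances from $p$ is at least of order $v_y$, up to an additive correction of size $O(|e_{i-1}|)$ that is dominated by $v_y$ because of the lower bound $v_y \geq c\,L/\myconst$. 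Hence $\myconst \dtb{p}^2$ grows like $\myconst v_y^2$, which exceeds $\tfrac12 L v_y$, producing the desired contradiction.

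The main obstacle I expect is tuning the constants sharply enough to recover exactly the coefficient $2$ in front of $\slope_{i-1}$ and the additive $2/\myconst$. The tangent-addition formula introduces denominators such as $1 - \slope_{i-1}\tan\beta$ and factors such as $1 + 2\slope_{i-1}^2$, which do not obviously produce the clean factor $2$; it may be necessary to split the slope-translation step into two subcases depending on whether $\slope_{i-1}$ is small or large compared to $1$. A secondary technical difficulty is verifying that the candidate point $p$ actually lies inside $\halfbox$ and is simultaneously far from all three of $uv$, $ur$, $vs$; this parallels the case analysis performed in the proof of Lemma~\ref{lem:prop-i} for points on the bottom edge of $\halfbox$ and uses the same invariants.
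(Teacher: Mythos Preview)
Your inductive framework is exactly that of the paper: the base case $\slope_0=0$ and the recurrence $\slope_i \leq 2\slope_{i-1}+2/\myconst$ are both what the paper uses, and the final step $2(2^i-2)/\myconst+2/\myconst=(2^{i+1}-2)/\myconst$ is identical.  The divergence, and the gap, is in how you propose to prove the recurrence.

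The paper does not rotate and does not argue by contradiction.  It stays in the original axis-aligned frame and performs a four-way case split on the position of the apex $r$ of $T_{i-1}$ relative to the endpoints $p$ (bottom) and $q$ (top) of $e_{i-1}$.  The crucial observation is that since $T_{i-1}\subset\halfbox$ (both $e_{i-1}$ and $e_i$ lie in $\halfbox$, hence so do all three vertices) and $q$ lies above $e_0=uv$, the apex $r$ in the relevant cases lies at height at least $|rq|_y$ above $e_0$, so $\dist(T_{i-1})\geq|rq|_y$.  Combining this with a crude area bound such as $\area(T_{i-1})\leq|rp|_y\cdot|rq|_x$ and the $\myconst$-distance inequality $\dist^2/\area\leq1/\myconst$, a short algebraic manipulation in each case yields $\slope_i-\slope_{i-1}\leq1/\myconst$ (cases (ii),(iii)) or $\slope_i-2\slope_{i-1}\leq2/\myconst$ (case (iv)).  The constants fall out cleanly because everything is expressed in $x$- and $y$-projections, which are exactly what absolute slopes are ratios of.

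Your route through a rotated frame and tangent-addition is where things go wrong.  First, the claim that $e_i$ is always the edge of smaller absolute slope is not correct: the selection rule picks the minimum-slope edge only when both non-base edges bound $T_{i-1}$ from above (the paper's case (iv)); in the other cases $e_i$ is forced, and your contradiction hypothesis ``both non-base edges are steep'' is strictly stronger than $\slope_i>2\slope_{i-1}+2/\myconst$, so the argument would not close.  Second, the distance $\dtb{p}$ is measured to the axis-aligned segments $uv,ur,vs$, so working in a rotated frame forces you back through the tangent-addition formula a second time to estimate it; as you yourself anticipate, the resulting factors $1+\slope_{i-1}^2$ and the like do not reduce to the exact coefficients $2$ and $2/\myconst$ that the lemma states.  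Since the lemma is stated with these specific constants (and they are used verbatim in Lemma~\ref{lem:prop-iii}), a looser recurrence does not prove it as written.  The fix is simply to abandon the rotation and argue directly with $x$- and $y$-projections in the original frame, splitting on where the apex sits.
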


\begin{figure}[t]
\centering
\includegraphics{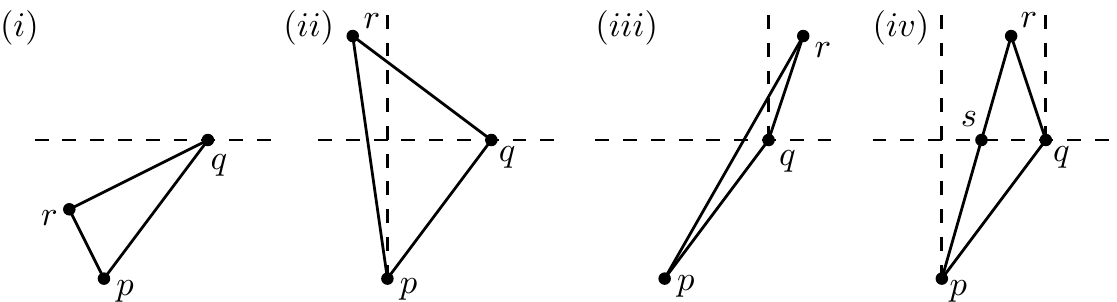}
\caption{The four cases for a point $r$ used in the proof of Lemma~\ref{lem:prop-ii}}
\label{fig:triangulation-cases}
\end{figure}

\begin{proof}
First note that from Lemma~\ref{lem:prop-i} we know that $e_i$ is contained in $\halfbox$.
Let $p$ and $q$ denote the bottom and top endpoints of $e_{i-1}$, and let $r$ denote the third vertex of $T_{i-1}$. Without loss of generality we assume that $p$ is to the left of $q$. We distinguish cases based on the location of $r$ relative to $p$ and $q$ (see Figure~\ref{fig:triangulation-cases}).

\smallskip
\emph{Case (i): $r$ is below $q$.}
If $r$ is below $q$, then $e_{i}$ is the edge $qr$. Since $r$ must be above the supporting line of $pq$ and left of $q$, we get that $\slope_{i} \leq \slope_{i-1}$.

\smallskip
\emph{Case (ii): $r$ is to the left of $p$ and above $q$.}
If $r$ is left of $p$ and above $q$, then $e_i = rq$.
Let $\area(\trian_{i-1})$ denote the area of $\trian_{i-1}$ and $\dist(\trian_{i-1})$ the maximum distance from any point in $\trian_{i-1}$ to the boundary of $P$.
Since $\trian_{i-1}$ is contained within $\halfbox$ and $q$ is above $e_0$, it follows that $\dist(\trian_{i-1}) \geq |rq|_y$. Due to our assumptions on the positions of $p,q,r$ it follows that $\area(\trian_{i-1}) \leq |rp|_y \cdot |rq|_x$.  This allows us to bound the slope $\slope_{i}$ as
    \[
    \slope_{i} - \slope_{i-1}
    = \sigma_{i} \left(1 - \frac{|pq|_y}{|pq|_x} \cdot \frac{|rq|_x}{|rq|_y} \right)
    \leq \sigma_{i} \left(1 - \frac{|pq|_y}{|rq|_y}\right)
    \leq \sigma_{i} \left(1 - \frac{|pq|_y}{|rp|_y}\right)
    \]\[
    = \sigma_{i} \frac{|rp|_y - |pq|_y}{|rp|_y}
    = \sigma_{i} \frac{|rq|_y}{|rp|_y}
    = \frac{|rq|_y^2}{|rq|_x \cdot |rp|_y}
    \leq \frac{\dist(\trian_{i-1})^2}{\area(\trian_{i-1})}
    \leq 1/\myconst.
    \]

\smallskip
\emph{Case (iii): $r$ is to the right of $q$.} As before we have $\dist(\trian_{i-1}) \geq |rq|_y$ and $\area(\trian_{i-1}) \leq |rq|_y \cdot |pq|_x$. We get
    \[
    \slope_{i} - \slope_{i-1}
    = \frac{|rp|_y}{|rp|_x} - \frac{|pq|_y}{|pq|_x}
    \leq \frac{|rp|_y - |pq|_y}{|pq|_x}
    = \frac{|rq|_y}{|pq|_x}
    = \frac{|rq|_y^2}{|rq|_y \cdot |pq|_x}
    \leq 1/\myconst.
    \]

\smallskip
\emph{Case (iv): $r$ is horizontally between $p$ and $q$.} This case provides us with two edges that face upward. Recall that in this case $e_i$ is the edge with smaller slope. We further split this case into three subcases. First assume that $\slope_{rp} \leq \slope_{rq}$. Let $s$ denote a point on $rp$ with the same $y$-coordinate as $q$, so $|sq|_x = |rp|_x \cdot \frac{|rq|_y}{|rp|_y} + |rq|_x$. We bound $\area(\trian_{i-1}) \leq |sq|_x \cdot |rp|_y$ and $\dist(\trian_{i-1}) \geq |rq|_y$. We get
    \[
    \slope_{i} - 2 \slope_{i-1}
    = \slope_{rp} - 2 \slope_{pq}
    = \slope_{rp} \left( 1 - 2 \frac{|pq|_y}{|pq|_x} \cdot \frac{|rp|_x}{|rp|_y} \right)
    \leq\footnotemark \slope_{rp} \left(1 - \frac{|pq|_y}{|rp|_y} \right)
    \]\[
    = \slope_{rp} \frac{|rq|_y}{|rp|_y}
    = 2 \frac{|rq|_y}{\frac{2}{\slope_{rp}} |rp|_y}
    \leq 2 \frac{|rq|_y}{\left(\frac{1}{\slope_{rp}} + \frac{1}{\slope_{rq}}\right) \cdot |rp|_y}
    = 2 \frac {|rq|_y} {|rp|_y \cdot \left( \frac{|rp|_x}{|rp|_y} + \frac{|rq|_x}{|rq|_y}\right)}
    \]\[
    = 2 \frac {|rq|_y^2} {|rp|_y \cdot \left(|rp|_x \cdot \frac{|rq|_y}{|rp|_y} + |rq|_x\right)}
    = 2 \frac{|rq|_y^2}{|rp|_y \cdot |sq|_x}
    \leq 2 \frac{\dist(\trian_{i-1})^2}{\area(\trian_{i-1})}
    \leq 2 / \myconst.
    \]

\footnotetext{This step assumes that $\frac{|rp|_x}{|pq|_x} \geq 1/2$, which follows from our assumption that $\slope_{rp} \leq \slope_{rq}$.}

The second case we assume that $\slope_{rq} < \slope_{rp}$ and $|rp|_x \geq |pq|_x / 2$. This leads to a very similar calculation to the previous case, namely
    \[
    \slope_{i} - 2\slope_{i-1}
    = \slope_{rq} - 2\slope_{pq}
    = \slope_{rq} \left( 1 - 2 \frac{|pq|_y}{|pq|_x} \cdot \frac{1}{\slope_{rq}} \right)
    \leq \slope_{rq} \left( 1 - 2 \frac{|pq|_y}{|pq|_x} \cdot \frac{1}{\slope_{rp}} \right)
    \]\[
    \leq \slope_{rq} \left(1 - \frac{|pq|_y}{|rp|_y} \right)
    = \slope_{rq} \frac{|rq|_y}{|rp|_y}
    = 2 \frac{|rq|_y}{\frac{2}{\slope_{rq}} |rp|_y}
    \leq 2 \frac{|rq|_y}{\left(\frac{1}{\slope_{rp}} + \frac{1}{\slope_{rq}}\right) \cdot |rp|_y}
    \]\[
    = 2 \frac {|rq|_y} {|rp|_y  \left( \frac{|rp|_x}{|rp|_y} + \frac{|rq|_x}{|rq|_y}\right)}
    = 2 \frac {|rq|_y^2} {|rp|_y  \left(|rp|_x \cdot \frac{|rq|_y}{|rp|_y} + |rq|_x\right)}
    = 2 \frac{|rq|_y^2}{|rp|_y \cdot |sq|_x}
    \leq 2 / \myconst.
    \]

Lastly we assume that $\slope_{rp} < \slope_{rp}$ and $|rq|_x \geq |pq|_x / 2$. Here we use slightly different bounds on the area, namely that $\area(\trian_{i-1}) \leq |rp|_y \cdot |pq|_x$. We still have the bound of $\dist(\trian_{i-1}) \geq |rq|_y$ on the distance to the boundary, which gives us
    \[
    \slope_{i} - 2\slope_{i-1}
    = \slope_{rq} - 2\slope_{pq}
    = \slope_{rq} \left( 1 - 2 \frac{|pq|_y}{|pq|_x} \cdot \frac{|rq|_x}{|rq|_y} \right)
    \leq \slope_{rq} \left( 1 - \frac{|pq|_y}{|rq|_y} \right)
    \]\[
    \leq \slope_{rq} \left( 1 - \frac{|pq|_y}{|rp|_y} \right)
    = \slope_{rq} \frac{|rp|_y - |pq|_y}{|rp|_y}
    = \frac{|rq|_y^2}{|rq|_x \cdot |rp|_x}
    \leq 2 \frac{|rq|_y^2}{|pq|_x \cdot |rp|_x}
    \leq 2 / \myconst.
    \]

\medskip
In each case we find that
    \[
    \slope_{i} \leq 2\slope_{i-1} + 2 / \myconst
    \leq 2 \cdot (2^{i} - 2) / \myconst + 2 / \myconst
    = (2^{i+1} - 2) / \myconst.
    \]
\end{proof}

\begin{lemma}
\label{lem:prop-iii}
If for all $j < i$ the edge $e_j$ is inside $\halfbox$ and $\max(4, \slope_j^2) \cdot |e_j| < \frac{\myconst}{8\sqrt{2}}$, then $|e_i| \leq 8 \eps \cdot 2^{(i+1)(i+7)} / \myconst^{3i}$.
\end{lemma}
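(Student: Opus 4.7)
The plan is to proceed by induction on $i$. The base case $i=0$ is immediate since $|e_0| = |uv| = \eps \leq 8\eps \cdot 2^{7}$. For the inductive step, I would compare the target bounds at consecutive indices and observe that their ratio is $2^{(i+1)(i+7)-i(i+6)}/\myconst^{3} = 2^{2i+7}/\myconst^{3}$. Using the slope bound $\slope_{i-1} \leq (2^{i}-2)/\myconst$ from Lemma~\ref{lem:prop-ii}, this is implied by the single-step recurrence
\[
|e_i| \;\leq\; \frac{128\,\slope_{i-1}^{2}}{\myconst}\,|e_{i-1}|,
\]
so the remainder of the argument reduces to establishing this inequality; plugging in the inductive hypothesis for $|e_{i-1}|$ then closes the step.

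To derive the recurrence I would mirror the four-case split from the proof of Lemma~\ref{lem:prop-ii}, based on the position of the third vertex $r$ of $T_{i-1}$ relative to the bottom and top endpoints $p, q$ of $e_{i-1}$. In each case, an upper bound on $\area(T_{i-1})$ expressed in terms of $|e_{i-1}|$, $|e_i|$ and their horizontal/vertical projections is already available from the corresponding case of Lemma~\ref{lem:prop-ii} and can simply be reused. The new ingredient is a matching lower bound on the area coming from the $\myconst$-distance property: since $T_{i-1} \subseteq \halfbox$ by Lemma~\ref{lem:prop-i}, the nearest boundary point from any $x \in T_{i-1}$ lies on $uv$, $ur$, or $vs$, so I would pick $x^{*} \in T_{i-1}$ near its top vertex (which, by the algorithm's selection rule, is guaranteed to lie above $e_0$) and lower-bound its vertical distance to $uv$ and its horizontal distance to $ur \cup vs$ in terms of $|e_i|$ and $\slope_i$, exactly as in the proof of Lemma~\ref{lem:prop-i}. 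Plugging $\myconst \cdot \dtb{x^{*}}^{2} \leq \area(T_{i-1})$ into the area upper bound and rearranging then yields a linear bound on $|e_i|$ in terms of $|e_{i-1}|$ and $\slope_{i-1}$.

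The main obstacle will be case~(iv), where both non-base edges of $T_{i-1}$ go upward and $e_i$ is selected as the one with smaller absolute slope. As in Lemma~\ref{lem:prop-ii}, I would further split into three subcases distinguished by which of $|rp|_x$ and $|rq|_x$ dominates, introducing the auxiliary point $s$ on $rp$ at the $y$-coordinate of $q$ in order to use the sharper area bound $\area(T_{i-1}) \leq |sq|_x \cdot |rp|_y$. The delicate bookkeeping here is to verify that, after inserting the slope bound from Lemma~\ref{lem:prop-ii}, the per-subcase coefficient of $\slope_{i-1}^{2}|e_{i-1}|/\myconst$ never exceeds the constant $128$ obtained above. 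Cases (i)--(iii) are routine by comparison: case~(i) is essentially free because there $\slope_i \leq \slope_{i-1}$ and $|e_i|$ is already controlled by the height of $T_{i-1}$, while cases (ii) and (iii) follow from a single application of the area/distance inequality once $x^{*}$ has been fixed.
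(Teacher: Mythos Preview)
Your proposed recurrence $|e_i| \leq \frac{128\,\sigma_{i-1}^{2}}{\alpha}\,|e_{i-1}|$ cannot hold as stated: take $i=1$. The base edge $e_0 = uv$ is horizontal, so $\sigma_0 = 0$, and your recurrence would force $|e_1| = 0$. More generally, whenever $e_{i-1}$ is nearly horizontal the factor $\sigma_{i-1}^2$ is far too small to control $|e_i|$. The paper handles exactly this regime by working with the slope $\sigma_i$ of the \emph{new} edge rather than $\sigma_{i-1}$, by using $(1+\sigma_i^2)$ rather than a bare square, and---crucially---by carrying an additive $4\eps/\sqrt{2}$ term through the recurrence: when $\sigma_i \leq 1$ the distance of a point on $e_i$ to the boundary is only $(|e_i|/\sqrt{2} - \eps)/2$, because the narrow column has width~$\eps$, and this $-\eps$ becomes the additive constant after rearranging.

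The paper's argument is also structurally simpler than what you outline. Instead of replaying the four positional cases of Lemma~\ref{lem:prop-ii}, it uses the universal area bound $\area(T_{i-1}) \leq |e_{i-1}|\cdot|e_i|$ and splits into just two cases according to whether $\sigma_i \leq 1$ or $\sigma_i > 1$; the witness point is taken on $e_i$ itself (not at the apex of $T_{i-1}$), so the distance estimate depends directly on $|e_i|$ and $\sigma_i$ and feeds into the recurrence $|e_i| \leq \tfrac{8}{\alpha}(1+\sigma_i^2)\,|e_{i-1}| + 4\eps/\sqrt{2}$. Your plan to reuse the area estimates from Lemma~\ref{lem:prop-ii} also stumbles in its case~(i), where that lemma establishes no area bound at all (it merely observes $\sigma_i \leq \sigma_{i-1}$).
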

\begin{figure}
\centering
\includegraphics{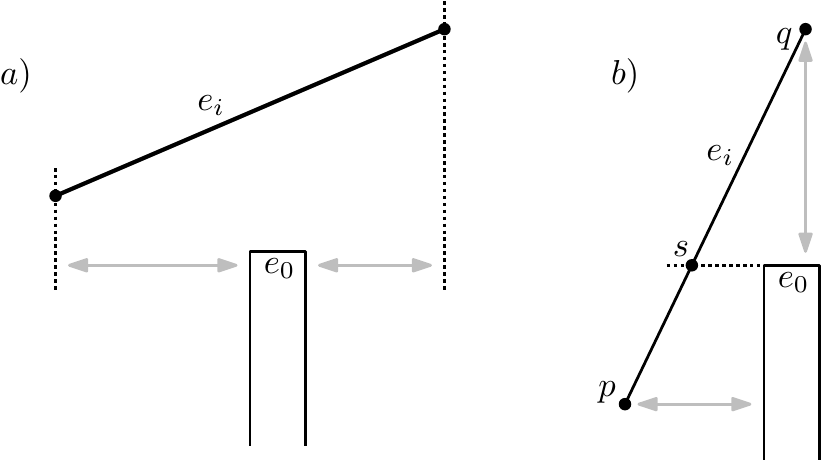}
\caption{Illustration of the two cases in the proof of Lemma~\ref{lem:prop-iii} based on the slope of $e_i$. Gray double arrows indicate distances used in the proof.}
\label{fig:illustration-lemma-7}
\end{figure}
\begin{proof}
Note that from Lemmas~\ref{lem:prop-i}~and~\ref{lem:prop-ii} we already know that $e_i$ is inside $\halfbox$ and $\slope_i \leq (2^{i+1} - 2) /\myconst$.
We first give bounds on the area and the distance to the boundary of $T_{i-1}$. We upper bound $\area(\trian_{i-1}) \leq |e_{i-1}| \cdot |e_{i}|$. For a lower bound on the distance to the boundary we look at the distance of points on $e_i$ to the boundary. From property (i) in Theorem~\ref{th:many-steiner} we know that $e_i$ is also inside $\halfbox$. Let $p$ and $q$ denote the two endpoints of $e_{i}$, without loss of generality we assume that $q$ is above or at the same height as $p$ and that $p$ is to the left of $q$. We distinguish two cases based on $\slope_{i}$.

First, the case when $\slope_{i} \leq 1$, as illustrated in Figure~\ref{fig:illustration-lemma-7}a. In this case $|e_{i}|_x \geq |e_{i}|/\sqrt{2}$ and $\dist(\trian_{i-1}) \geq (|e_{i}|/\sqrt{2} - \eps) / 2$. Filling this into our $\myconst$-distance property we get
\[
    1/\myconst \geq \frac{\dist(\trian_{i-1})^2}{\area(\trian_{i-1})}
    \geq \frac{(|e_i|/(2\sqrt{2}) - (\eps/2))^2}{|e_{i-1}| \cdot |e_{i}|}
    = \frac{|e_{i}|^2/8 - |e_{i}|\cdot \eps/(2\sqrt{2}) + \eps^2/4} {|e_{i-1}|\cdot|e_{i}|}
\]\[
    \geq \frac{|e_{i}|^2/8 - |e_{i}|\cdot \eps/(2\sqrt{2})} {|e_{i-1}|\cdot|e_{i}|}
    = \frac{|e_{i}|/8 - \eps/(2\sqrt{2})} {|e_{i-1}|},
\]
which can be rewritten as
\[|e_{i}| \leq 8/\myconst \cdot |e_{i-1}| + 4\eps / \sqrt{2}.\]

In the second case, we have $\slope_{i} > 1$, as shown in Figure~\ref{fig:illustration-lemma-7}b. Let $s$ be a point on the supporting line of $e_{i}$ that is horizontally aligned to $e_0$, then at least one of the edges $ps$ or $qs$ must have length at least $e_{i}/2$. If $|qs| \geq |e_{i}|/2$ we find that $\dist(\trian_{i-1}) \geq |e_{i}| / (2\sqrt{2})$. If this is not the case, then the edge $ps$ is a segment of $e_i$ and must be below $e_0$ and $|ps| \geq |e_{i}|/2$. Since $ps$ cannot intersect the boundary of $P$ and its nearest points are on one of the vertical neighbors of $e_0$ either $p$ or $s$ has a horizontal distance of at least $|e_{i}|/(2\sqrt{2}\slope_{i})$ to the boundary. We again fill this into our region property to get
\[
    1/\myconst \geq \frac{\dist(\trian_{i-1})^2}{\area(\trian_{i-1})}
    \geq \frac{(|e_{i}|/(2\sqrt{2}\slope_{i}))^2}{|e_{i-1}| \cdot |e_{i}|}
    = \frac{|e_{i}|^2}{8\slope_{i}^2 \cdot |e_{i-1}| \cdot |e_{i}|}
    = \frac{|e_{i}|}{8\slope_{i}^2 \cdot |e_{i-1}|}.
\]
Rewriting this we get
\[|e_{i}| \leq 8 / \myconst \cdot \slope_{i}^2 |e_{i-1}|.\]

Combining these two we find that
\begin{align*}
|e_{i}| &\leq 8 / \myconst \cdot (1+ \slope_{i}^2) |e_{i-1}| + 4\eps / \sqrt{2}
\\ &\leq  8 / \myconst \cdot (1 + ((2^{i+1} - 2)/\myconst)^2) |e_{i-1}| + 4\eps / \sqrt{2}
\\ &\leq 8 / \myconst \cdot (1 + 2^{2i+2} / \myconst^2) |e_{i-1}| + 4\eps / \sqrt{2}
\\ &\leq 2^{2i+6} / \myconst^3 |e_{i-1}| + 4\eps / \sqrt{2}
\\ &\leq 2^{2i+6} / \myconst^3 \cdot 8 \eps / \myconst^{3(i-1)} \cdot 2^{(i)(i+6)}  + 4\eps / \sqrt{2}
\\ &= 8 \eps / \myconst^{3i} \cdot 2^{(i+1)(i+7) - 1} + 4\eps / \sqrt{2}
\\ &\leq 8 \eps / \myconst^{3i} \cdot 2^{(i+1)(i+7)}.
\end{align*}
Note that in the last step we assume that $\myconst \leq 1$. Which is fine, since higher values of $\myconst$ only make the $\myconst$-distance property stricter, and we are constructing a lower bound.
\end{proof}

\medskip

Theorem~\ref{th:many-steiner} implies that we cannot restrict ourselves to triangulations
if we want a linear-size decomposition with the $\myconst$-distance property.
We hence consider possibly non-conforming decompositions (that is, we allow T-junctions) using convex $k$-gons.
We first show how to compute a linear-size decomposition with the $\myconst$-distance property
that uses convex $k$-gons for $k\leq 7$, and then we argue that each $k$-gon
can be further subdivided into convex quadrilaterals and triangles.

\begin{figure}[b]
\centering
\includegraphics{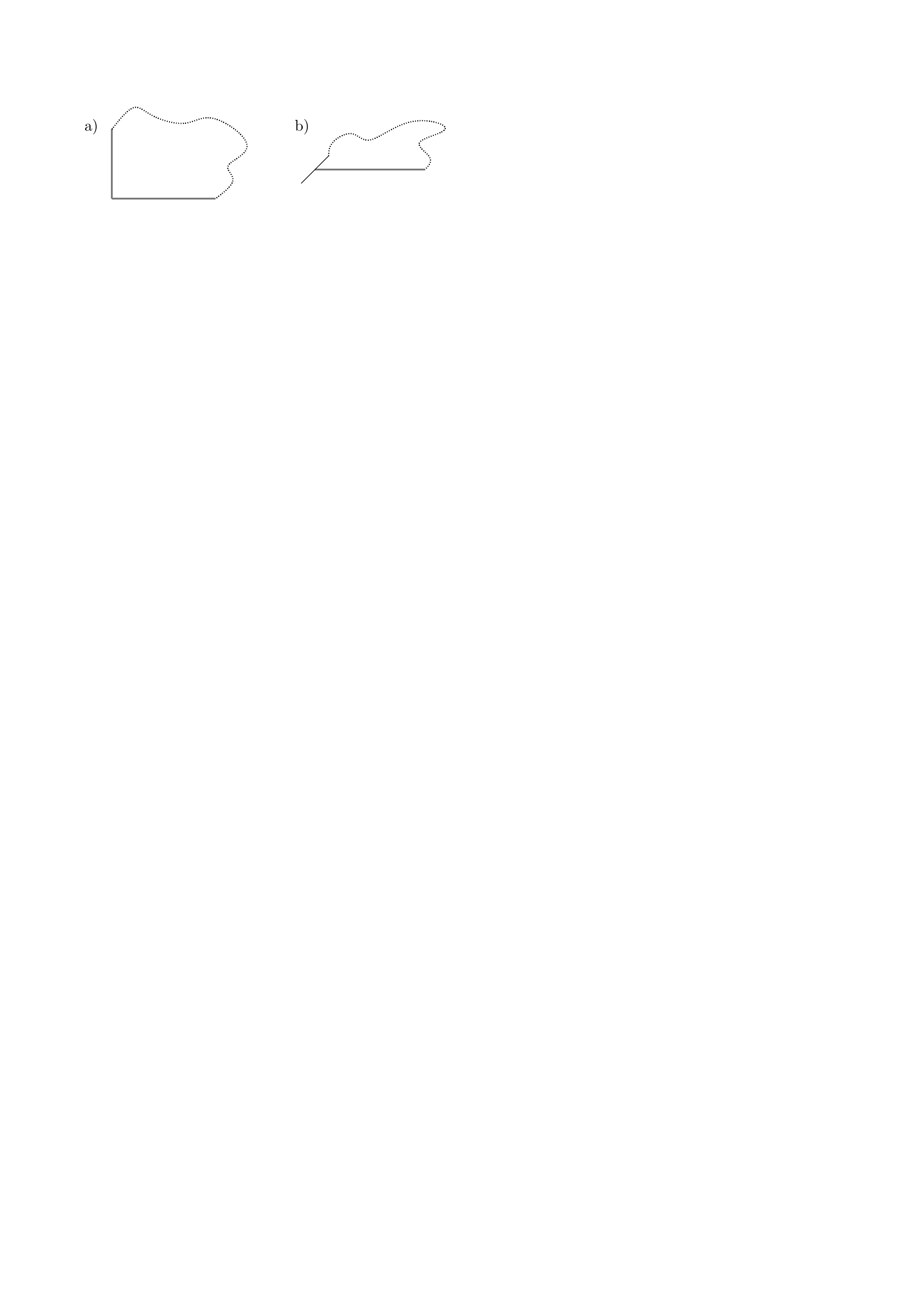}
\caption{Polygons on which we recurse consist of up to two subdivision edges and a boundary chain.}
\label{fig:recursion-polygon}
\end{figure}

\paragraph{A decomposition with 7-gons}
We assume without loss of generality that no two vertices of the input polygon~$P$ have the same $x$- or $y$-coordinates. We describe a recursive algorithm that computes in each step a single 7-gon\footnote{From now on, when we use the term 7-gon, we mean a convex $k$-gon for  $k\leq 7$.}
of the subdivision and then recurses on up to four smaller polygons. In a generic step of the recursive procedure, we are given a polygon
bounded by a chain of edges from the original polygon and by two subdivision edges,
one vertical and one horizontal; see Figure~\ref{fig:recursion-polygon}a.
(In our figures we use gray lines for subdivision edges, solid black lines for polygon edges, and dotted black lines to indicate an unspecified continuation of the boundary of the input polygon. Black disks mark vertices of the input polygon.)
The subdivision edges meet in a vertex, the \emph{corner} of the polygon.
One of the subdivision edges can have zero length (see Figure~\ref{fig:recursion-polygon}b). Without loss of generality we assume that
the horizontal subdivision edge, $e_\hor$, is the longer of the two subdivision edges,
and that the vertical subdivision edge $e_\ver$ extends
upward from the left endpoint of $e_\hor$.
Initially, $P$ does not have the right form as there are no subdivision edges. Hence we first pick an arbitrary point in the interior of $P$ and shoot axis-aligned rays in all four directions. This subdivides $P$ into four polygons that each have exactly two subdivision edges that meet in a vertex.

\begin{figure}
\centering
\includegraphics{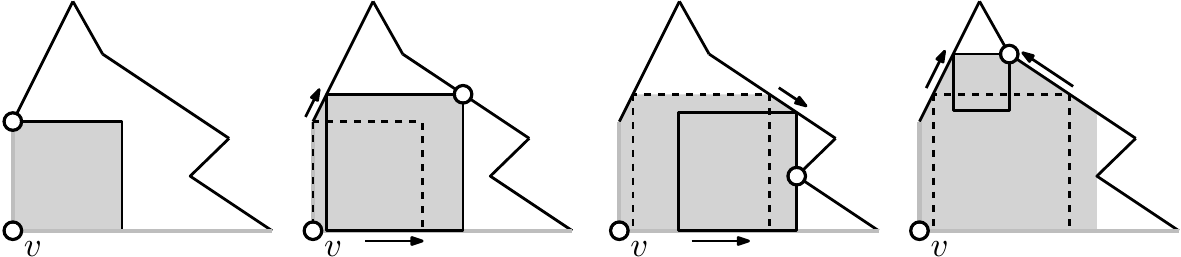}
\caption{Example of constructing a 7-gon from a recursion polygon.}
\label{fig:7-gon-example}
\end{figure}

We now describe how we generate a 7-gon of the decomposition in a recursive step
on input polygon~$Q\subset P$ with two subdivision edges, $e_h$ and $e_v$, meeting in  corner~$v$, see Figure~\ref{fig:7-gon-example} for an example.
We first grow a square with $v$ as lower-left corner, until the square
hits the boundary of~$Q$. (This could be immediately, if the vertical subdivision edge
has zero length.) If one of the edges of the square hits a vertex of the original polygon~$P$,
we stop. Otherwise a vertex of the square hits an edge of~$P$. We then start pushing
the square along the edge, meanwhile growing it so that it remains in contact with the subdivision edge. This again continues until the boundary of $P$ is hit, which may either terminate
the process (when a vertex of $P$ is hit) or not (when an edge is hit), and so on.
The 7-gon will be the union (swept volume) of all squares generated during the entire process. Figure~\ref{fig:cases-overview} gives an overview of the cases that can arise,
with A being the start configuration. Thick arrows indicate a transition from one case to another.
As mentioned, we stop pushing a square when a new vertex of $P$ occurs on the boundary. Cases where this happens are given a number (A1, B1, B2, $\ldots$). Next we provide more details on how to push the squares in each of the cases and when one case transitions to another. The top left, top right, bottom left, and bottom right vertex of a square will be denoted by $p_\nw, p_\NE, p_\sw, p_\se$, respectively, and the top, right, bottom, and left edge by $e_\north, e_\east, e_\south, e_\west$.
In each case the process ends when a vertex of $P$ is hit.

\myitemize{
\item[A] We grow a square from the corner while keeping $e_s$ on $e_h$ and $e_w$ on $e_v$ until it hits an edge or vertex of~$P$. We go into case~B if $p_\nw$ hits an edge $e_\nw$ of~$P$ or into case~E and~F if $p_\NE$ hits an edge $e_\NE$. Note that $p_\se$ cannot hit an edge of the polygon before $p_\nw$, since $e_h$ is at least as long as $e_v$.
\item[B] The vertex $p_\nw$ is on an edge $e_\nw$ of $P$ and $e_\south$ is on $e_\hor$. We push the square to the right while maintaining these contacts. We go into case~C if $p_\se$ hits an edge $e_\se$ of $P$ or into case~D and~F if $p_\NE$ hits an edge $e_\NE$ of $P$.
\item[C] The vertex $p_\nw$ is on an edge $e_\nw$ of $P$ and $p_\se$ is on an edge $e_\se$ of $P$. We push the square up and to the right maintaining these contacts. We go into case~D and~G if $p_\NE$ hits an edge $e_\NE$ of $P$.
\item[D] The vertex $p_\nw$ is on an edge $e_\nw$ and $p_\NE$ is on an edge $e_\NE$ of $P$. We push the square upward while maintaining these contacts.
\item[E] The vertex $p_\NE$ is on an edge $e_\NE$ of $P$ and $e_\west$ is on $e_\ver$. We push the square upward while maintaining these contacts. We go into case D if $p_\nw$ hits an edge of $P$.
\item[F] The vertex $p_\NE$ is on an edge $e_\NE$ of $P$ and $e_\south$ is on $e_\hor$ and we push the square to the right while maintaining these contacts. We go into case G if $p_\se$ hits an edge of $P$.
\item[G] The vertex $p_\NE$ is on an edge $e_\NE$ and $p_\se$ is on an edge $e_\se$ of $P$, and we push the square to the right while maintaining these contacts.
}

\begin{figure}
\centering
\includegraphics{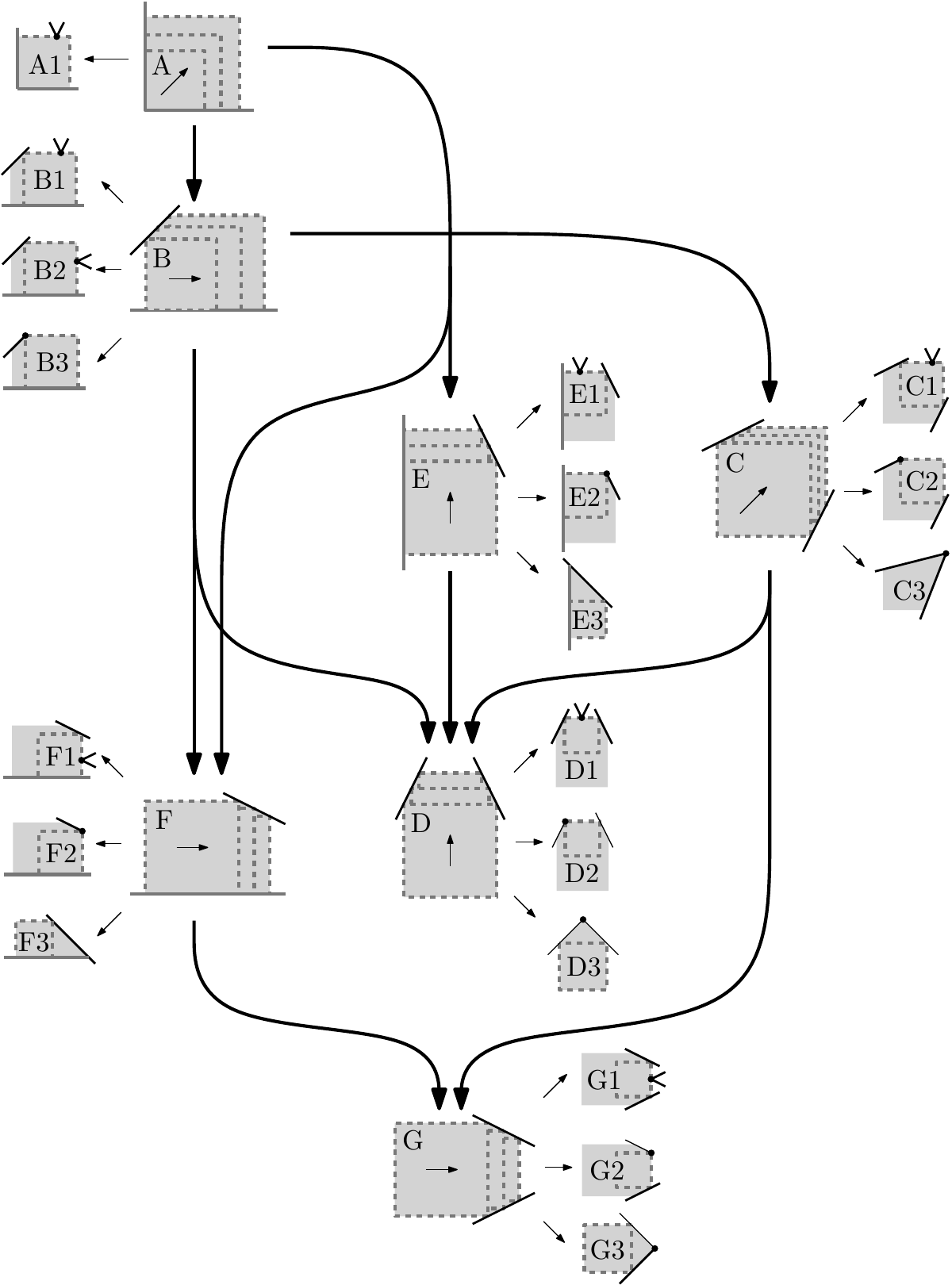}
\caption{We construct 7-gons by pushing squares through the polygon according to cases~A to~G. Fat arrows indicate a transition from one case to another and a split means that we continue in two separate directions. Note that cases E and F, and D and G are symmetric.}
\label{fig:cases-overview}
\end{figure}

\begin{lemma}
The process above generates a convex $k$-gon $C$ with $k\leq 7$. Moreover, for any
$p\in C$ we have $\area(C) \geq \frac{1}{2} \cdot \dtb{p}^2$, where
$\dtb{p}$ denotes the distance from $p$ to the boundary of the original polygon~$P$.
\end{lemma}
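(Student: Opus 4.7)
The plan is to verify three facts: (i) $C$ is convex, (ii) $C$ has at most seven edges, and (iii) $\area(C) \geq \frac{1}{2}\dtb{p}^2$ for every $p \in C$. The first two will follow from a unified case analysis of the sweep, while the third follows from a simple observation about contacts of the swept squares with the boundary of $P$.

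For convexity and the edge count, I would observe that within each individual case (A through G), the square undergoes a linear motion --- some combination of translation along $e_\hor$ or $e_\ver$ together with homothetic growth --- so the portion of $C$ swept within a single case is a convex trapezoid (or triangle) with at most four edges. I would then trace through the possible sequences in the state diagram of Figure~\ref{fig:cases-overview} and verify that at each transition the outer boundary of the accumulated union turns consistently counterclockwise, preserving convexity. The resulting boundary of $C$ consists of: a piece of $e_\hor$, optionally a piece of $e_\ver$, pieces of at most three boundary edges of $P$ (namely $e_\nw$, $e_\NE$, and $e_\se$), and at most two axis-parallel edges coming from the starting or ending square positions. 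Enumerating maximal combinations of these across the transition diagram yields at most seven edges. I expect this geometric case analysis to be the main obstacle, since the several possible transition sequences and splits must each be checked to rule out reflex angles and extra edges.

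For the area inequality, take any $p \in C$ and consider a square $S$ produced during the sweep that contains $p$. In cases B--G, such an $S$ always has at least one contact point $q$ on the boundary of $P$, namely $p_\nw \in e_\nw$, $p_\NE \in e_\NE$, or $p_\se \in e_\se$. In case A the intermediate squares touch only the subdivision edges $e_\hor$ and $e_\ver$, but the case~A squares all share the corner $v$ and grow monotonically, so $p$ also lies in the final case~A square, which by the termination condition does touch the boundary of $P$. Either way we obtain a square $S \subseteq C$ with $p \in S$ and a point $q$ lying on both the boundary of $P$ and the boundary of $S$. Writing $s$ for the side length of $S$, we have $\dtb{p} \le |pq| \le s\sqrt{2}$, and therefore $\dtb{p}^2 \le 2s^2 = 2\area(S) \le 2\area(C)$, which is the claimed inequality.
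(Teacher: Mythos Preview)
Your proposal is correct and follows essentially the same approach as the paper. The paper dispatches the convexity and edge-count claim with a one-line appeal to a ``straightforward case analysis'' of the transition diagram, and proves the area bound exactly as you do: $C$ is a union of squares each touching $\partial P$, so any $p\in C$ lies in some such square $\sigma$, giving $\dtb{p}\le\sqrt{2}\cdot\length(\sigma)$ and hence $\area(C)\ge\area(\sigma)\ge\tfrac{1}{2}\dtb{p}^2$. Your explicit handling of case~A (noting that the growing squares are nested so $p$ lies in the final one, which does contact $\partial P$) is a small clarification the paper leaves implicit.
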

\begin{proof}
A straightforward case analysis of the different paths that the process may
follow in Figure~\ref{fig:cases-overview}---note that we can actually follow
several paths, since sometimes we continue pushing in two separate directions---shows
that $C$ is a convex 7-gon.
The construction guarantees that $C$ is the union of a (possibly infinite)
set of squares that each touch the boundary of~$P$. Let $p$ denote a point in $C$ and $\sigma$ a square containing $p$ that touches the boundary of $C$.
Then $\dtb{p} \leq \sqrt{2} \cdot \length(\sigma)$, where $\length(\sigma)$ denotes the edge length of $\sigma$.
It follows that
$\area(C) \geq \area(\sigma) = \length(\sigma)^2 \geq \frac{1}{2}\cdot \dtb{p}^2$.
\end{proof}

\begin{wrapfigure}[6]{r}{3cm}
\vspace{-10pt}
\raggedleft
\includegraphics{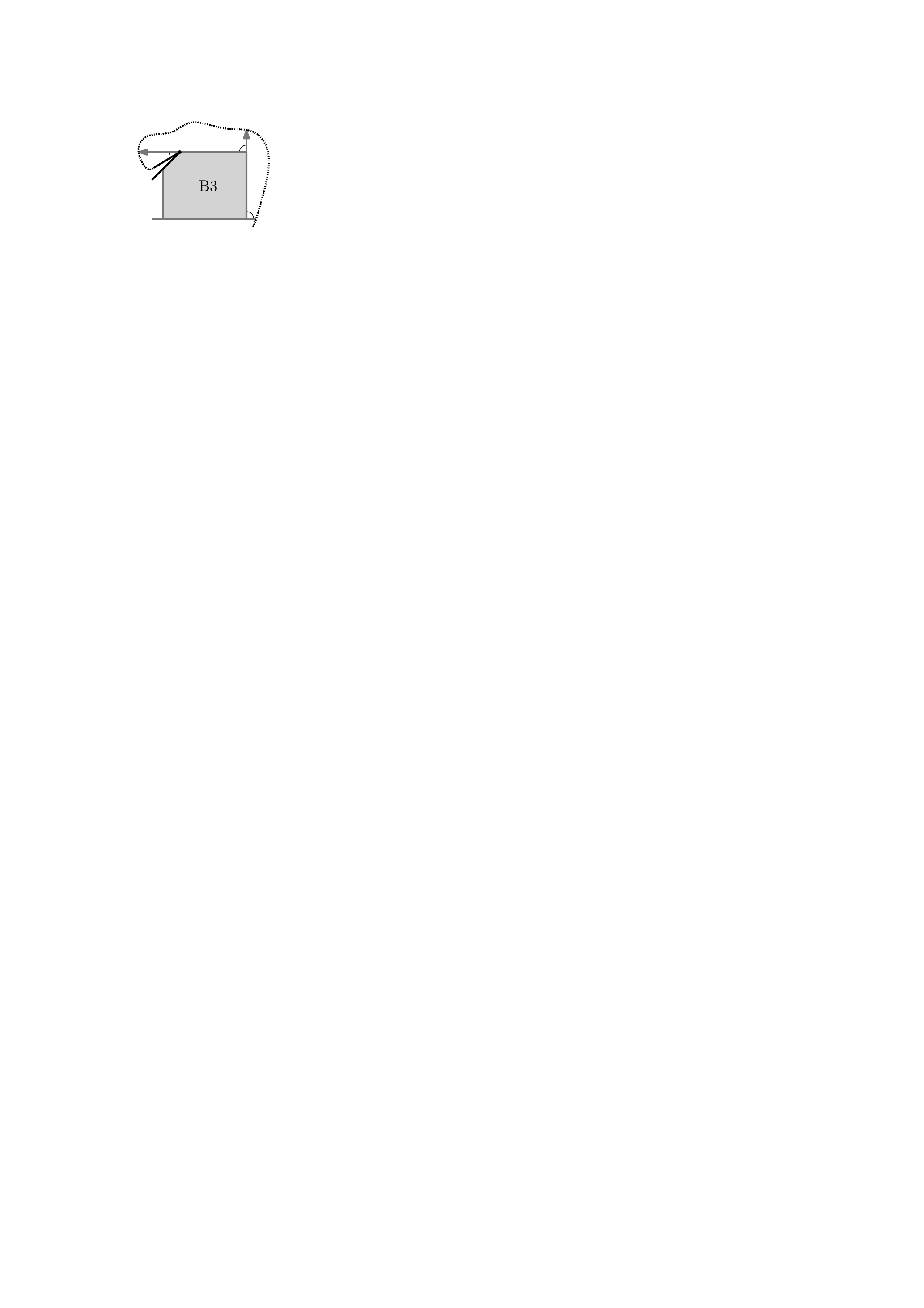}
\end{wrapfigure}

After constructing the 7-gon $C$, we should recurse on the remaining
parts of the polygon. The parts we can recurse on must have at most two orthogonal subdivision edges that meet in a point, as in Figure~\ref{fig:recursion-polygon}.
Parts for which this is not yet the case are first subdivided further by shooting horizontal and/or vertical rays from
certain vertices of $C$ so that the required property holds for the resulting subparts. Which rays to shoot depends on the final case in the construction
of $C$. The figure to the right shows case B3; the corners of the parts
on which we recurse are indicated by small circular arcs. In total, we may get up to four parts in which we recursively construct new 7-gons. Next, we bound the total number of regions that are created.

\begin{lemma}
\label{lem:region-count}
The algorithm described above creates $O(n_P)$ 7-gons in total,
when applied to a polygon $P$ with $n_P$ vertices.
\end{lemma}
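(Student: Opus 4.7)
The plan is an amortized potential-function argument. For each active subpolygon~$Q$ in the recursion, let $\phi(Q)$ be the number of vertices of $P$ that lie strictly in the interior of the polygon-edge portion of $\partial Q$---equivalently, the polygon vertices of $P$ on $\partial Q$ that are not yet ``pinned'' at a corner where a subdivision edge meets $\partial P$. After the opening step, which shoots four axis-aligned rays from the chosen interior point and produces four initial subpolygons, general position guarantees that no ray exits $P$ through a polygon vertex, so every vertex of $P$ lies in the interior of the polygon chain of exactly one initial subpolygon. Hence the initial total potential equals $n_P$.

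The first step of the analysis is to observe that every single 7-gon construction terminates at a vertex of $P$. Inspecting cases A--G of Figure~\ref{fig:cases-overview}, the growing or pushed square transitions between cases whenever it touches an edge of $P$, but the process halts only when the square's boundary first meets a polygon vertex~$v$. Since, in general position, the corners of $Q$ and the endpoints of its subdivision edges all lie in relative interiors of polygon edges rather than at vertices of $P$, this terminating vertex $v$ must lie in the interior of $Q$'s polygon chain and is therefore counted by~$\phi(Q)$.

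The central step is to verify that, after $C$ is formed and the auxiliary rays illustrated in the wrapped figure (for case B3) and its analogues are shot to create the subparts $Q_1,\ldots,Q_k$, the terminating vertex $v$ becomes a corner of $C$ and a corner of every subpart that still has $v$ on its boundary. The ray-shooting rule ensures that, in each subpart inheriting $v$, a newly cut subdivision edge meets the remaining polygon edge at $v$, pinning $v$ as a corner rather than leaving it in the interior of that subpart's polygon chain. Consequently $v$ contributes to $\phi(Q)$ but to none of $\phi(Q_1),\ldots,\phi(Q_k)$, so
\[
\sum_{i=1}^{k}\phi(Q_i)\leq \phi(Q)-1.
\]
Summing this inequality over the recursion tree bounds the total number of recursive calls---and therefore the number of 7-gons produced---by the initial total potential plus the $O(1)$ overhead from the opening step, giving $O(n_P)$.

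The main obstacle is the finite but tedious case analysis needed to confirm that the terminating vertex actually becomes a corner of every subpart it inherits, across all the terminating subcases (A1, B1, B2, B3, and the analogues arising in C--G); this relies on the precise rules for where the auxiliary rays are shot in each case. A secondary concern---degenerate configurations where, for instance, a subdivision-edge endpoint from an earlier recursive call happens to coincide with a vertex of $P$---can be absorbed by a symbolic perturbation of the interior point from which the opening four rays are shot, reducing everything to the general-position situation treated above.
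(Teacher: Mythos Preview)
Your potential-function argument is essentially the paper's own proof: your $\phi(Q)$ coincides with the paper's $V_Q$ (the polygon vertices on $\partial Q$ excluding endpoints of subdivision edges), and both arguments observe that the terminating polygon vertex is absorbed by the 7-gon while the remaining vertices are distributed among the at most four subparts. The one step you leave implicit---but which the paper states explicitly and is needed for $\sum_i\phi(Q_i)\le\phi(Q)-1$---is that every vertex of $V_Q$ other than the terminating one lands in the polygon chain of at most one subpart; this holds because the subparts' polygon chains are pairwise disjoint and, in general position, none of the auxiliary rays exits through a polygon vertex.
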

\begin{proof}
Let $V_Q$ denote the subset of vertices of $P$ that are on the boundary of a
polygon $Q\subset P$ on which we recurse, excluding the possible vertices of $P$
that are the endpoints of the subdivision edges of $Q$. Recall that after we construct a
7-gon $C$ inside $Q$, the remainder of $Q$ is subdivided into at most four parts
on which we recurse again. At least one vertex of $V_Q$ is on the boundary of $C$,
so each part has strictly fewer vertices of $P$ on its boundary.
We also know that each vertex of $V_Q$ can be on the boundary of at most one
part (recall that vertices on endpoints of subdivision edges are not considered). It follows that only $O(n_P)$ 7-gons are constructed.
\end{proof}

Next we describe how to implement the algorithm in $O(n_P \log n_P)$ time. Each of the cases A to G can be viewed as moving a square from a start location to an end location such that all intermediate squares have specific contacts to the polygon $Q$ as detailed in each case description. To find the swept volume of this sequence of squares it suffices to know in each of the cases at which squares we start and end. To find these start and end squares we need some supporting data structures.

We use the medial axis $\medaxis$ of $P$, with the following asymmetric convex distance function. Let~$p$ and~$q$ be two points in the plane. The distance from $p$ to $q$ is the edge length of the smallest square with its lower left corner on $p$ that has $q$ on its boundary. This is different from the $L_\infty$ distance, since we grow a square from its corner, not its center. As a result, the ``distance'' from $p$ to $q$ is defined only if $q$ lies to the north-east of $p$. However,
for any point inside $P$ the distance to the boundary of $P$ and the nearest point on the boundary are well defined, which is sufficient for our purposes. Conceptually, one can also set the undefined distances to infinity.

Such a medial axis is the same as the Voronoi diagram of the line segments that form the polygon boundary with respect to a convex distance function. Fortune showed how to compute a Voronoi diagram of line segments in $O(n\log n)$ time for the Euclidean distance using a sweepline approach~\cite{f1987}. This approach can be extended to convex distance functions, even when the reference point is on the boundary as in our case~\cite{f1985,f2015}.
We then construct the following data structures:
\myitemize{
\item
    We preprocess each medial axis so that we can do point location in $O(\log n)$ time.
    Since the medial axis is a connected polygonal subdivision this can be done in $O(n)$ time~\cite{k1983}.
\item
    We also preprocess each medial axis so that we can answer horizontal and vertical
    ray shooting queries in $O(\log n)$ time. This can again be done in $O(n)$ time by first computing the horizontal and vertical
    decomposition of $P$~\cite{c1991}, and then preprocessing these trapezoidal maps for point location.
\item Finally, we preprocess $P$ itself in $O(n)$ time such that we can do horizontal and vertical ray shooting in $O(\log n)$ time.
}

Initially (case A) we want to find the largest square that we can grow from the corner~$v$. We locate the cell of $\medaxis$ that contains $v$, which gives us the vertex or edge of the polygon, say edge $e$, that is closest to $v$ in the specified distance measure. This implies that $e$ is the first edge hit by the boundary of a square grown from $v$.
In this way we determine in $O(\log n)$ time if we are in case~A,~B, or~E. Next we push the square upward or to the right. We then have to determine the final square for that movement and in which case we should continue. We distinguish two different types of movement for the square. Either the square has one edge on one of the vertical or horizontal subdivision edges (case~B,~E, and~F), or it has two corners on polygon edges of~$P$ (case~C,~D, and~G).

If one edge of the square stays on a subdivision edge then specifically the lower left vertex stays on the subdivision edge and the series of squares that we create are exactly the largest squares with their lower left corners on the subdivision edge. Recall that we stop moving the square when another edge or vertex of $P$ hits the boundary of the square. Let $q$ denote the lower left corner of this square. By definition of $\medaxis$ the point $q$ has to be on a bisector of $\medaxis$ as there are two different features (edges or vertices) of $P$ that are at equal distance. Hence, the process of moving a square along a subdivision edge is essentially the same as moving its lower left corner point until it hits an edge of the medial axis (or $P$). We can use horizontal or vertical ray shooting to find in $O(\log n)$ time the point $q$ where we end the movement along the subdivision edge.

When we move a square while keeping two vertices on edges of $P$ it follows from the definition of the medial axis and our distance measure that the lower left vertex of the square remains on the bisector of the two edges of $P$. The movement ends when a third edge or vertex of $P$ is on the boundary of the square, so at a vertex of the medial axis. Specifically the vertex where the bisector along which the lower left vertex was moving, ends. To find the final square of the movement we have to find the bisector, determine which endpoint of the bisector we need and find the three edges or vertices of $P$ that define that vertex. Since we already found the right bisector in the previous case, each of these steps can be done in $O(1)$ time after which we can determine in $O(1)$ time how to continue.
To summarize, we obtain the following lemma.

\begin{lemma}
\label{ref:compute-7-gon}
Computing the 7-gon in a recursive step of the algorithm takes $O(\log n)$ time,
after $O(n)$ preprocessing.
\end{lemma}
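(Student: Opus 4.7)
The plan is to build a short stack of preprocessing structures on $P$ so that every ``event'' during the construction of a single 7-gon---namely the growing or sliding square touching a new feature of $\partial P$---can be resolved by one logarithmic query, and then to argue that only a constant number of such events occur per recursive step.

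In the preprocessing phase I would first compute the medial axis $\medaxis$ of $P$ under the asymmetric convex distance function that measures distance from $p$ to $q$ as the edge length of the smallest square with lower-left corner at $p$ that contains $q$; as noted in the excerpt, this is a segment Voronoi diagram under a convex distance function and can be built by an adaptation of Fortune's sweep. On top of $\medaxis$ I would layer a point-location structure in $O(n)$ time via Kirkpatrick's method, and a horizontal/vertical ray-shooting structure built from the trapezoidal map of $\medaxis$, also in $O(n)$ time. Finally, I would preprocess $P$ itself for horizontal and vertical ray shooting, again in $O(n)$ time. All of these subordinate structures are linear-size and linear-time, and are shared across every recursive call.

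For a single recursive step on a subpolygon with corner $v$, I would drive the square through the cases of Figure~\ref{fig:cases-overview} using two kinds of queries. The initial case~A is handled by point-locating $v$ in $\medaxis$: the cell of $\medaxis$ containing $v$ identifies the first feature of $\partial P$ hit by a square grown from $v$, and hence both the size of that first square and whether we transition into case~A1, B, or E/F, all in $O(\log n)$ time. In sliding phases where one edge of the square lies on a subdivision edge (cases B, E, F), the lower-left corner of the square moves along that subdivision edge, and by definition of $\medaxis$ the next event occurs exactly when this corner crosses an edge of $\medaxis$ or when the square hits $\partial P$; either is found by one horizontal or vertical ray-shooting query against $\medaxis$ and against $P$, again in $O(\log n)$. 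In sliding phases where two vertices of the square rest on edges of $P$ (cases C, D, G), the lower-left corner of the square traces a single bisector arc of $\medaxis$, so the next event is simply the far endpoint of that arc; since the bisector was already identified when we entered the phase, retrieving its endpoint and the three features defining it takes $O(1)$ time.

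Because the state diagram in Figure~\ref{fig:cases-overview} contains a constant number of nodes and each path through it terminates after a constant number of transitions, at most $O(1)$ of the queries above are performed per recursive call, for a total of $O(\log n)$ time per 7-gon. The main obstacle I expect is justifying the identification of the two sliding motions with the structure of $\medaxis$: one must check that under the asymmetric square-from-the-corner distance, the lower-left corner of the square is the genuine ``reference point'' of the distance function, so that the locus where a square simultaneously touches two features of $\partial P$ really coincides with a bisector edge of $\medaxis$ and a new contact really corresponds to a vertex of $\medaxis$. Once this correspondence is established, the case analysis and the logarithmic query bounds follow routinely.
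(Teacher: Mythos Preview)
Your proposal is correct and mirrors the paper's own argument essentially step for step: the same asymmetric-square medial axis $\medaxis$, the same trio of auxiliary structures (point location in $\medaxis$, axis-parallel ray shooting in $\medaxis$ via its trapezoidal maps, and axis-parallel ray shooting in $P$), and the same case-by-case reduction of each sliding phase to either a ray-shooting query (cases B, E, F) or walking to the endpoint of the current bisector (cases C, D, G). The one caveat you flag---that the lower-left corner is the correct reference point so that two-contact squares trace bisector edges and three-contact squares sit at medial-axis vertices---is exactly the point the paper takes for granted, so there is no gap between your outline and theirs.
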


\paragraph{From 7-gons to quadrilaterals and triangles}
As a last step we can convert the 7-gons from our decomposition
into convex quadrilaterals and triangles. The resulting decomposition still
has the $\myconst$-distance property, although the value for $\alpha$
will decrease from 1/2 to 1/8, as shown below.
Let $Q$ denote a convex polygon with $n_Q$ vertices. By the ham-sandwich theorem~\cite{e1987}, there exists a line cutting~$Q$ into two portions of equal area with at most $\lfloor n_Q/2 \rfloor$ vertices of $Q$ strictly on each side of the line.  Cutting along this line, we obtain two polygons with half the area and at most $\lfloor n_Q/2 \rfloor + 2$ vertices each. By repeating this process, if necessary, we obtain either triangles or quadrilaterals. Using these ham-sandwich cuts we prove the following theorem.

\begin{theorem}
\label{thm:polygon-decomp}
Given a simple polygon $P$ we can compute in $O(n\log n)$ time a subdivision of $P$
consisting of $O(n)$ triangles and convex quadrilaterals with the $(1/8)$-distance property.
\end{theorem}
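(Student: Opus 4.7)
The plan is to bootstrap from the decomposition into $O(n)$ $7$-gons with the $(1/2)$-distance property, which is already established by Lemma~\ref{lem:region-count} and computable in $O(n\log n)$ time by Lemma~\ref{ref:compute-7-gon}, and then to refine each $7$-gon into triangles and quadrilaterals using the ham-sandwich cuts described just before the theorem statement.

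For each $7$-gon $C$ in the initial decomposition I would apply at most two successive ham-sandwich cuts. A single cut on a convex $k$-gon yields two convex pieces, each with at most $\lfloor k/2\rfloor+2$ vertices and exactly half of the original area. Starting from $k\leq 7$, one cut produces pieces with at most $5$ vertices; a second cut on any resulting $5$-gon produces pieces with at most $4$ vertices. Thus every $7$-gon is subdivided into at most four convex quadrilaterals or triangles, and the total number of regions remains $O(n)$. Convexity is automatic because cutting a convex region by a line yields two convex regions.

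The $(1/8)$-distance property then follows by combining two effects. Let $R$ be a final region, contained in an original $7$-gon $C$, and let $p\in R$. At most two halving cuts were applied, so $\area(R)\geq \area(C)/4$. Because $\dtb{p}$ measures the distance from $p$ to the boundary of $P$, not to the boundary of the decomposition, it is unchanged when $C$ is refined, and the $(1/2)$-distance property of $C$ gives $\area(C)\geq (1/2)\dtb{p}^2$. Hence $\area(R)\geq (1/8)\dtb{p}^2$, as required.

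For the time bound, the $7$-gon decomposition takes $O(n\log n)$, and since each $7$-gon has constant complexity its ham-sandwich cuts can be computed in $O(1)$ time by brute force (enumerating candidate halving lines through vertices and edges). Summing over the $O(n)$ pieces contributes only $O(n)$, so the overall running time stays at $O(n\log n)$. I do not anticipate a real obstacle here: the whole argument amounts to the vertex-count recurrence $k\mapsto \lfloor k/2\rfloor+2$ (which drops from $7$ to $4$ in two steps), combined with two area halvings on top of the already-proved $(1/2)$-distance property of the $7$-gons.
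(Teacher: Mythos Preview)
Your proposal is correct and matches the paper's own proof essentially step for step: both invoke Lemmas~\ref{lem:region-count} and~\ref{ref:compute-7-gon} for the $O(n\log n)$ decomposition into $O(n)$ convex $7$-gons with the $(1/2)$-distance property, then apply at most two rounds of area-halving ham-sandwich cuts (using the vertex recurrence $k\mapsto\lfloor k/2\rfloor+2$, so $7\to5\to4$) to obtain at most four quadrilaterals or triangles per $7$-gon, losing a factor of~$4$ in area and hence arriving at the $(1/8)$-distance property. Your explicit observation that $\dtb{p}$ is measured to $\partial P$ and is therefore unaffected by the refinement is exactly the point underlying the paper's one-line conclusion.
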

\begin{proof}
By Lemmas~\ref{lem:region-count}~and~\ref{ref:compute-7-gon} we can compute
in $O(n\log n)$ time a decomposition of $P$ into $O(n)$ convex $k$-gons,
for $k\leq 7$, that has the $(1/2)$-distance property.
We further subdivide each $k$-gon using ham-sandwich cuts, as explained above. In the worst case we start with a 7-gon that is split into two 5-gons by the first ham-sandwich cut, after which each 5-gon is split into two quadrilaterals. We then get four quadrilaterals each having 1/4 of the area of the 7-gon.
Thus, since the decomposition into 7-gons had the (1/2)-distance property, the new decomposition has the (1/8)-distance property.
\end{proof}

Combining this result with Theorem~\ref{thm:decomp-pl} we obtain the following corollary.

\begin{cor}
Let $\subdiv$ denote a planar polygonal subdivision with $O(n)$ vertices and let $\gamma_i$ for each $P_i \in \subdiv$ denote the probability that a query point lies in $P_i$. We can construct in $O(n\log n)$ expected time a point location structure that uses $O(n)$ space
and answers a query with a point $p$ in $O\left(\min \left(\log n, 1 + \log \frac{\area(P_i)}{\gamma_i\dtb{p}^2}\right)\right)$ time, where $\dtb{p}$ denotes the Euclidean distance from $p$ to the nearest point on any edge of $\subdiv$.
\end{cor}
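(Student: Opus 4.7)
The plan is to simply combine Theorem~\ref{thm:polygon-decomp} with Theorem~\ref{thm:decomp-pl}. For each polygon $P_i \in \subdiv$, let $n_{P_i}$ denote its number of vertices; then $\sum_i n_{P_i} = O(n)$. I would first apply Theorem~\ref{thm:polygon-decomp} independently to each $P_i$ to obtain a decomposition $\mathcal{P}_i^{\mathrm{dec}}$ consisting of $O(n_{P_i})$ triangles and convex quadrilaterals with the $(1/8)$-distance property, in $O(n_{P_i}\log n_{P_i})$ time. Summing over all polygons of $\subdiv$ gives a total of $O(n)$ constant-complexity convex regions, computed in $O(n\log n)$ worst-case time and $O(n)$ space.

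Next I would feed the resulting collection of decompositions into Theorem~\ref{thm:decomp-pl}, yielding a point location structure whose query time for any point $p$ lying in the region of $\mathcal{P}_i^{\mathrm{dec}}$ containing $p$ is
\[
O\left(\min \left(\log n, 1 + \log \frac{\area(P_i)}{\gamma_i\dtb{p}^2}\right)\right).
\]
Here $\dtb{p}$ refers to the distance from $p$ to the boundary of the polygon $P_i$ of $\subdiv$ containing $p$, which is exactly the Euclidean distance from $p$ to the nearest edge of~$\subdiv$. The expected $O(n\log n)$ construction time and $O(n)$ space of Theorem~\ref{thm:decomp-pl} dominate the (deterministic) $O(n\log n)$ time and $O(n)$ space for computing the decomposition, so the overall bounds match the claim.

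There is no real obstacle here beyond verifying that the two theorems compose cleanly. The one point to note is that Theorem~\ref{thm:decomp-pl} assumes a distance-sensitive decomposition as input (with some absolute constant $\alpha > 0$), and Theorem~\ref{thm:polygon-decomp} supplies exactly such a decomposition with $\alpha = 1/8$. The constant $\alpha$ only affects the hidden constant inside the $O(\cdot)$ in the query bound, so it is absorbed into the asymptotic expression.
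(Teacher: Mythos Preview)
Your proposal is correct and matches the paper's approach exactly: the paper simply states that the corollary follows from combining Theorem~\ref{thm:polygon-decomp} with Theorem~\ref{thm:decomp-pl}, and you have spelled out the straightforward details of that combination.
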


\section{Depth-bounded quadtree}
Although computing a distance-sensitive decomposition takes $O(n\log n)$ time asymptotically there is a lot of overhead involved. During preprocessing we need several medial axes of the input subdivision $\subdiv$, and each of these has to be further processed for point location and horizontal and vertical ray-shooting. We also create many additional regions which would cause the worst-case $O(\log n)$ search time to have a much larger constant when compared to a worst-case optimal point location structure. In this section we present a much simpler solution that has very little extra overhead compared to a general worst-case optimal point locations structure, but only works for a special case of the problem.

In this special case we assume no distribution of the queries over the polygons of the subdivision is given and we want the query time to be dependant only on the distance from a point to the boundary. Let $\subdiv$ be a planar polygonal subdivision and assume that $\subdiv$ is contained in a square with area~1. (Note that in this case we do not require $\subdiv$ to be connected.) We show how to construct a query structure that can answer a query for a point $p$ in $O\left(\min \left(\log n, 1 + \log \frac{1}{\dtb{p}^2}\right)\right)$ time, where $\dtb{p}$ again denotes the Euclidean distance from $p$ to the nearest point on any edge of $\subdiv$. This can be seen as a special case of the general problem where each region $P_i \in \subdiv$ has a weight proportional to its area, so $\gamma_i = \area(P_i) / \area(\subdiv)$.

In essence we have two different requirements for a query. First, no query should ever take more than $O(\log n)$ time, and second, a query for a point far from the boundary should take only $O(1 + \log \frac{1}{\dtb{p}^2})$ time. A worst-case optimal point location structure can be used to satisfy the first requirement and a quadtree where each leaf intersects $O(1)$ features of the subdivision satisfies the second requirement. Unfortunately, neither satisfies both: a quadtree may have nodes with a very high depth and a worst-case optimal point location structure gives no guarantees on finding points far from the boundary quickly. We can however use both structures together to get the bound we need.

We construct two structures: a general worst-case optimal point-location structure $\pl(\subdiv)$ and a depth-bounded quadtree $\qt(\subdiv)$. With a slight abuse of terminology we use \emph{leaf, root} and \emph{node} to denote nodes of the quadtree as well as the square regions they are associated with. The root of the quadtree is the bounding square of $\subdiv$, which we assume to have edge length~1. Each leaf of the quadtree is either empty---it does not intersect the boundary of $\subdiv$---or it has a depth of $\lceil \log \sqrt{n} \rceil$; see Figure~\ref{fig:qt}a. A query for a point $p$ first finds the leaf $v$ of the quadtree that contains $p$. If $v$ does not intersect any of the boundary elements of $\subdiv$, then the polygon $P \in \subdiv$ that contains $v$ also contains $p$. If $v$ is not empty, then we conclude that $p$ is close to the boundary of $\subdiv$ and perform a query in $\pl(\subdiv)$.

\begin{figure}
\centering
\includegraphics{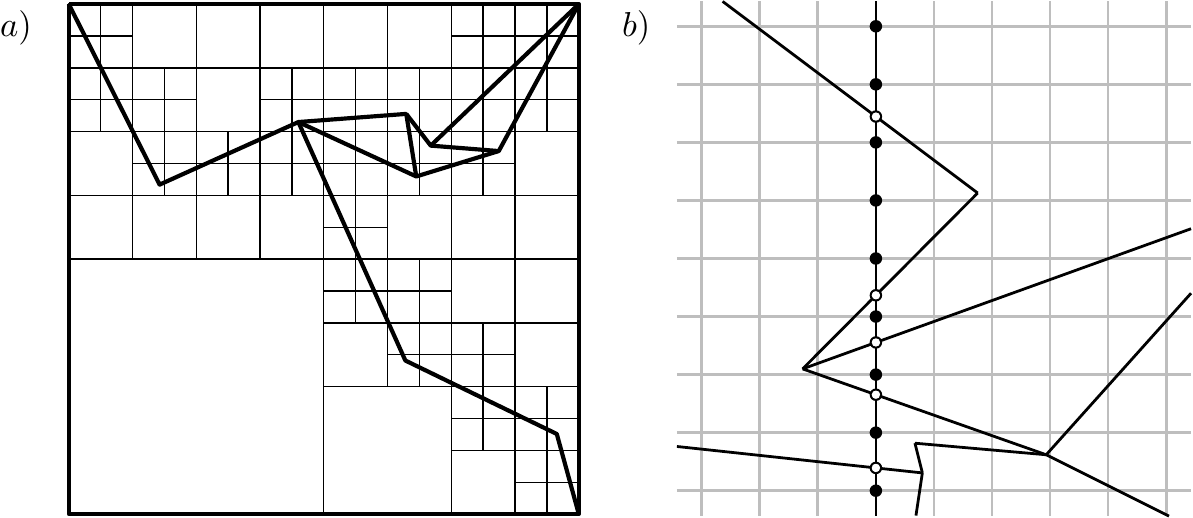}
\caption{$a)$ A depth-bounded quadtree $\qt(\subdiv)$. $b)$ An illustration of the sweep-line algorithm. Closed disks indicate grid vertices and open disks indicate intersection points of subdivision edges with the sweep-line.}
\label{fig:qt}
\end{figure}

\paragraph{Preprocessing} Constructing a worst-case optimal point-location structure takes $O(n \log n)$ time, where~$n$ is the complexity of $\subdiv$.  When constructing the quadtree we have to account for the presence of edges of $\subdiv$, and not just its vertices. The standard method to construct a quadtree on a set of points is to recursively split nodes that contain more than one point and propagate the points down the tree such that each leaf stores the points contained in its associated square. In our case each leaf would have to store the edges that intersect it, which would lead to superlinear storage as each edge may intersect many leaves of the quadtree. Instead we use a different approach that uses a sweep-line over the underlying grid of the quadtree.

We first construct the complete quadtree up to depth $\lceil \log \sqrt{n} \rceil$, which represents a grid where each cell has an edge length $\ell$ between $1 / (2\sqrt{n})$ and $1 / \sqrt{n}$. It follows that the grid contains $O(n)$ cells in total. We will mark each leaf of the quadtree whose associated grid-cell is intersected by an edge of~$\subdiv$. A cell of the grid is intersected by an edge of $\subdiv$ if and only if either one of its boundary segments intersects an edge or if the cell contains a vertex of $\subdiv$. We can mark leaves that contain a vertex by locating each vertex within the grid, which takes $O(n \log n)$ time. We then use two sweep-lines to mark cells whose boundary segments are intersected by edges of $\subdiv$. We use a horizontal sweep-line to mark all leaves whose grid cells have their left or right boundary segment intersected by an edge of $\subdiv$. The sweep goes from left to right and we maintain an ordered list of edges from the subdivision that intersect the sweep-line. This ordering changes only when the sweep-line encounters vertices of the subdivision. When the sweep-line encounters a vertex $v$ we locate the vertex in the current edge-ordering in $O(\log n)$ time and then spend $O(k \log n)$ time adding and removing edges adjacent to $v$, where $k$ is the degree of $v$. As there are $O(n)$ vertices and the sum of their degrees is $O(n)$ the vertex events take $O(n\log n)$ time in total. When the sweep-line encounters a vertical line of the grid we test for intersections between the subdivision edges stored in the sweep line and the vertical grid-segments that coincide with the sweep line. Each vertical grid-segment---the boundary edge of one or two cells---is intersected if and only if there are edges of $\subdiv$ between its endpoints on the sweep-line. This is easy to test by simply locating each grid-vertex on the vertical line in the edge-ordering stored in the sweep-line; see Figure~\ref{fig:qt}b. If a grid-segment is intersected by an edge of the subdivision we mark the leaves whose cells are to the left and right of this grid-segment. For each such event we have to perform $O(\sqrt{n})$ binary searches on the edge-ordering of the sweep-line, taking $O(\sqrt{n} \log n)$ time in total. Since there are $O(\sqrt{n})$ such events this takes $O(n\log n)$ time in total. This sweep marks all cells of which the left or right boundary edge is intersected by a subdivision edge. A similar vertical sweep is used to mark all leaves of which the top or bottom segment of its associated grid-cell is intersected by a subdivision edge.

After performing both sweeps each leaf intersected by the subdivision boundary is marked. Next we mark internal nodes of the quadtree of which the associated square intersects an edge of $\subdiv$. We use a bottom-up approach where each node is marked if and only if at least one of its children is marked. Next, the tree is trimmed by removing all nodes with an unmarked parent. The resulting quadtree is a depth-bounded tree in which each leaf has depth $\lceil \log \sqrt{n} \rceil$ or does not intersect the boundary of the subdivision $\subdiv$. As a final step we do a single point location for each empty (not marked) leaf of the quadtree to determine which polygon of $\subdiv$ it is contained in and store this information in the leaf.

\begin{lemma}\label{lem:2d-prep}
Given a subdivision $\subdiv$, we can construct the depth-bounded quadtree $\qt(\subdiv)$ and worst-case optimal point-location structure $\pl(\subdiv)$ in $O(n\log n)$ time, where $n$ is the complexity of $\subdiv$.
\end{lemma}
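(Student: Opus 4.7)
The proof is essentially a careful accounting of the construction already sketched in the paragraphs preceding the lemma, so my plan is to follow that construction and tally the running time of each stage.

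First I would dispense with the point-location structure: building $\pl(\subdiv)$ is done by any of the standard worst-case optimal methods (e.g.\ Kirkpatrick's hierarchy or trapezoidal map randomized incremental construction), which run in $O(n\log n)$ time on an $n$-edge subdivision, so that piece contributes the desired bound and nothing more needs to be argued.

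The bulk of the plan concerns $\qt(\subdiv)$. I would first build the complete quadtree of depth $\lceil \log \sqrt{n}\rceil$; its underlying grid has $O(n)$ cells, so this is $O(n)$ work. Next I would locate each of the $O(n)$ subdivision vertices inside the grid (trivial in $O(1)$ per vertex via coordinate bucketing, or $O(\log n)$ per vertex if we reuse the quadtree), marking the cell that contains it; total $O(n\log n)$. Then I would run the two orthogonal sweeps described above. For the horizontal sweep, I maintain the status structure (a balanced BST) storing the subdivision edges intersecting the current sweep-line ordered by $y$-coordinate of intersection. I would separately bound the cost of the two event types: at each subdivision-vertex event $v$ of degree $k$, I locate $v$ and perform $O(k)$ insertions/deletions in $O((k+1)\log n)$ time, which sums to $O(n\log n)$ over all vertices since $\sum k = O(n)$; at each of the $O(\sqrt{n})$ grid-line events, I query the status for the $O(\sqrt{n})$ horizontal grid vertices lying on that line, costing $O(\sqrt{n}\log n)$ per event and $O(n\log n)$ in total. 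The vertical sweep is symmetric. Each time a grid-segment is reported as crossed, I mark the (at most two) adjacent leaves in $O(1)$ time, and the total number of such marks is bounded by the total cell–edge incidence, which is $O(n)$ and does not affect the asymptotics.

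After both sweeps, every leaf whose associated cell meets the boundary of $\subdiv$ is marked. I would then propagate marks bottom-up in $O(n)$ time (one pass over the $O(n)$ nodes of the grid-depth tree), and trim every node all of whose ancestors are unmarked, again in $O(n)$ time; the resulting tree has the property that each leaf is either empty or sits at depth $\lceil \log \sqrt{n}\rceil$. Finally, for each empty leaf I perform a single point-location query in $\pl(\subdiv)$ for an arbitrary interior point, paying $O(\log n)$ per leaf and $O(n\log n)$ in total since there are $O(n)$ leaves in the trimmed tree, and store the resulting polygon pointer in the leaf.

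The main obstacle I anticipate is not difficulty but bookkeeping: making sure that the grid-line events in each sweep really cost only $O(\sqrt{n}\log n)$ each and that there are only $O(\sqrt{n})$ such events, so that the product is $O(n\log n)$ rather than something larger. This hinges on the observation that the status structure's size is $O(n)$ only in the worst case and the queries per event are $O(\sqrt{n})$ binary searches, one per grid vertex on the current sweep-line. Once that balance is verified, summing the pieces yields the claimed $O(n\log n)$ total, which together with the point-location structure's bound completes the proof.
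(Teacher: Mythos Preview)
Your proposal is correct and follows essentially the same construction and cost accounting as the paper: build $\pl(\subdiv)$ by a standard $O(n\log n)$ method, build the full depth-$\lceil\log\sqrt{n}\rceil$ quadtree with $O(n)$ cells, mark leaves via vertex bucketing plus two orthogonal sweeps (vertex events summing to $O(n\log n)$, and $O(\sqrt{n})$ grid-line events at $O(\sqrt{n}\log n)$ each), then propagate marks, trim, and locate one point per empty leaf. The only quibble is the phrasing ``trim every node all of whose ancestors are unmarked''---you mean ``whose parent is unmarked,'' which is what the bottom-up mark propagation makes the right criterion.
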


\paragraph{Querying} Given the quadtree $\qt(\subdiv)$ and the point location structure $\pl(\subdiv)$ we perform a point location query on a point $p$ as follows. We first find the leaf $v$ of $\qt(\subdiv)$ that contains $p$. If $v$ is empty, then we report the polygon that contains $v$, otherwise we do a point location query for $p$ in $\pl(\subdiv)$ to find the polygon containing $p$. Next we show that this indeed provides us with the required query-time.

\begin{lemma}\label{lem:2d-query}
A point-location query as described above for a point $p$ takes $O(\min(1 + \log \frac{1}{\dtb{p}^2}, \log n))$ time, where $\dtb{p}$ is the distance from $p$ to the boundary of $\subdiv$.
\end{lemma}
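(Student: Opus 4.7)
The plan is to bound the query time by analysing the depth of the leaf $v$ of $\qt(\subdiv)$ reached by the descent, and to show that the two terms in the $\min$ dominate in two complementary regimes of $\dtb{p}$. The worst-case $O(\log n)$ bound will be immediate: the descent in $\qt(\subdiv)$ visits at most $\lceil \log\sqrt{n}\rceil + 1 = O(\log n)$ levels and spends $O(1)$ per level, and an auxiliary query in $\pl(\subdiv)$, whenever performed, costs $O(\log n)$ by the choice of a worst-case optimal point-location structure.

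The heart of the argument will be the following elementary geometric fact, which I would first isolate. If a quadtree cell $c$ has side length $\ell$, contains $p$, and $\ell\sqrt{2} \leq \dtb{p}$, then the diameter of $c$ is at most $\dtb{p}$, so every point of $c$ lies in the same polygon of $\subdiv$ as~$p$; in particular $c$ does not meet the boundary of $\subdiv$ and was therefore not marked during preprocessing. The descent from the root of $\qt(\subdiv)$ follows only marked cells, so it stops at the first unmarked cell containing $p$ (or else at maximum depth), and the observation above guarantees that an unmarked cell along this path is reached no later than depth $k = \lceil \log(\sqrt{2}/\dtb{p})\rceil$. Hence, when the leaf $v$ is unmarked, its depth is $O(1 + \log(1/\dtb{p}))$, the descent costs $O(1 + \log(1/\dtb{p})) = O(1 + \log(1/\dtb{p}^2))$ time, and the polygon associated with $v$ is returned in $O(1)$ additional time without any call to $\pl(\subdiv)$.

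To close the argument I would handle the complementary regime in which the descent ends at a marked leaf $v$ and the $\pl(\subdiv)$ query is actually triggered. Such a $v$ must be at the maximum depth $\lceil \log\sqrt{n}\rceil$ (an unmarked leaf would have been reached earlier otherwise) and must intersect the boundary of $\subdiv$; since $v$ has diameter at most $\sqrt{2}/\sqrt{n}$ and contains both $p$ and a boundary point, we conclude $\dtb{p} \leq \sqrt{2}/\sqrt{n}$, so $1/\dtb{p}^2 \geq n/2$ and $1 + \log(1/\dtb{p}^2) = \Omega(\log n)$. In this regime the $O(\log n)$ total cost of descending to $v$ and then querying $\pl(\subdiv)$ is absorbed into $O(1 + \log(1/\dtb{p}^2))$, and combining the two regimes yields the stated bound $O(\min(\log n, 1 + \log(1/\dtb{p}^2)))$.

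The only real obstacle I foresee is conceptual: one must keep the trimming of $\qt(\subdiv)$ straight and check that an unmarked cell really has no surviving descendants after trimming, so that the descent terminates at the first unmarked ancestor of the max-depth cell containing $p$ rather than at some deeper level. Once that bookkeeping is pinned down, the proof is driven entirely by the diameter-versus-distance comparison above, and the constants $\sqrt{2}$ and the ceiling in $\lceil \log\sqrt{n}\rceil$ affect only the hidden constant in the big-$O$.
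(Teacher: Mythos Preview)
Your proposal is correct and follows essentially the same approach as the paper. The only cosmetic difference is the direction of the implication: the paper takes the empty leaf $v$ at depth $i$, observes that its \emph{parent} was marked and hence meets the boundary, and derives $\dtb{p}\le 2\sqrt{2}/2^i$ to bound $i$; you instead start from $\dtb{p}$, note that the depth-$k$ cell with $k=\lceil\log(\sqrt{2}/\dtb{p})\rceil$ has diameter at most $\dtb{p}$ and is therefore unmarked, and conclude the descent halts by depth $k$. Both are the same diameter-versus-distance comparison, and your treatment of the marked-leaf case (deriving $\dtb{p}\le\sqrt{2}/\sqrt{n}$ so that $\log n = O(1+\log(1/\dtb{p}^2))$) matches the paper verbatim.
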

\begin{proof}
We distinguish two cases. First assume the leaf $v$ from $\qt(\subdiv)$ that contains $p$ is empty. Let $i$ denote the depth of $v$ in the quadtree, so we spend $O(i)$ time to locate $p$. The node $v$ has an edge length of $1/2^i$, and its parent has an edge length of $2/2^i$. The parent of $v$ was split, so it must have intersected the boundary of $\subdiv$. This implies that $\dtb{p} \leq 2\sqrt{2} / 2^i$, since both $p$ and some point on the boundary of $\subdiv$ are contained in the parent of $v$. Plugging this in, we find that indeed
\[
O(i) = O\left(\min\left(1 + \log \frac{1}{2\sqrt{2} / 2^i}, \log n\right)\right) = O\left(\min\left(1 + \log \frac{1}{\dtb{p}^2}, \log n\right)\right).
\]

Now suppose $v$ is not empty. In this case we spend $O(\log n)$ time in the quadtree and $O(\log n)$ time in the general point location structure. However, since $v$ must have an edge length of at most $1/\sqrt{n}$ and is intersected by the boundary of $\subdiv$ we know that $\dtb{p} \leq \sqrt{2} / \sqrt{n}$ and the query bound follows.
\end{proof}

Combining Lemmas~\ref{lem:2d-prep}~and~\ref{lem:2d-query}, we obtain the desired result.

\begin{theorem}
Given a planar piecewise-linear subdivision $\subdiv$ contained in a square with edge length~$1$, we can construct in  $O(n\log n)$ expected time a point location structure that can answer a query for a point $p$ in $\subdiv$ in $O(\min(1 + \log \frac{1}{\dtb{p}^2}, \log n))$ time, where $\dtb{p}$ denotes the distance from $p$ to the boundary of the polygon $P\in \subdiv$ that contains it.
\end{theorem}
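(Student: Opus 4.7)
The plan is to derive the theorem as an immediate corollary of Lemmas~\ref{lem:2d-prep} and~\ref{lem:2d-query}. First I would invoke Lemma~\ref{lem:2d-prep} to construct the depth-bounded quadtree $\qt(\subdiv)$ together with the worst-case optimal point-location structure $\pl(\subdiv)$ in $O(n\log n)$ time; the expected-time wording in the theorem is inherited from $\pl(\subdiv)$, whose standard worst-case-optimal implementations are built by randomized incremental construction.

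Next I would spell out the two-stage query procedure exactly as given in the preceding discussion: descend in $\qt(\subdiv)$ to the leaf $v$ containing $p$, and if $v$ is empty return the polygon of $\subdiv$ stored at $v$ during preprocessing, otherwise run a query in $\pl(\subdiv)$. Correctness of the empty-leaf branch follows from the construction of $\qt(\subdiv)$: an unmarked leaf $v$ is, by the sweep-based marking step, disjoint from every edge of $\subdiv$, so all of $v$ (in particular $p$) lies inside a single polygon, which is precisely the one that was assigned to $v$ by the final point-location pass. Correctness of the non-empty branch is inherited from $\pl(\subdiv)$.

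The query time is then given directly by Lemma~\ref{lem:2d-query}, which yields $O(\min(1+\log\tfrac{1}{\dtb{p}^2},\log n))$. The only potentially subtle point is matching the two definitions of $\dtb{p}$: the lemma measures distance to the boundary of $\subdiv$, whereas the theorem measures distance to the boundary of the containing polygon $P$. These coincide, however, because every edge bounding $P$ is an edge of $\subdiv$, and any edge of $\subdiv$ that does not bound $P$ lies entirely in the closed complement of $P$ and therefore cannot be closer to $p\in P$ than the nearest edge of $P$ itself. Hence no additional argument is required, and I do not anticipate a real obstacle beyond this bookkeeping: the proof is essentially a one-line appeal to the two preceding lemmas together with this equivalence of distances.
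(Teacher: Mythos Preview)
Your proposal is correct and follows exactly the paper's approach: the paper's proof is literally the one-line sentence ``Combining Lemmas~\ref{lem:2d-prep} and~\ref{lem:2d-query}, we obtain the desired result.'' Your additional remarks about the source of the ``expected time'' and the equivalence of the two $\dtb{p}$ definitions are sound bookkeeping that the paper omits, but they do not change the argument.
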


\paragraph{Convex subdivisions in $\Reals^3$}
The above method of using a depth-bounded quadtree together with a worst-case optimal point-location structure can also be applied to convex subdivisions in~$\Reals^3$. In this case we would want to compute a depth-bounded octree, where each leaf either does not intersect any boundary facet or has depth $\lceil \log \sqrt[3]{n} \rceil$. As before we can first construct the full octree of depth $\lceil \log \sqrt[3]{n} \rceil$ and then mark leaves that intersect the subdivision boundary. In a general connected subdivision in 3D a cell is intersected if and only if its 2-dimensional faces are intersected by a subdivision facet. The straightforward extension of the sweep-line approach from the 2-dimensional case would require us to maintain a dynamic subdivision defined by the intersection of the input subdivision $\subdiv$ and the sweep-plane. Then whenever the sweep-plane encounters a plane in the grid we should determine if the boundary squares of the grids cells are empty in the sweep-plane. This seems difficult to do in near-linear time, as we cannot afford to traverse the entire sweep-plane, which may have $\Theta(n)$ complexity. However, in a convex subdivision a grid cell is intersected by a subdivision facet if and only if at least two of its vertices are in different cells of the subdivision. As a result we can simply perform a point location query on each vertex of the grid and test for each grid cell whether all vertices are contained in the same polyhedron of the subdivision. If not all vertices belong to the same polyhedron we mark the associated leaf of the octree. We can use the $O(n\log n)$ space structure by Snoeyink~\cite{s2004} to perform each query in $O(\log^2 n)$ time. After marking the leaves of the octree we propagate the marking upwards, trim the tree and determine for each empty leaf which polyhedron contains it, similar to the two-dimensional case. A query for a point $p$ is again performed by first locating $p$ in the octree, where at most $O(\log n)$ time is spent. If the resulting leaf is not empty we instead find $p$ in the general point location structure in $O(\log^2 n)$ time.

\begin{theorem}
Given a 3-dimensional convex polyhedral subdivision $\subdiv$ contained in a cube with edge length~$1$, we can construct in $O(n \log^2 n)$ time and $O(n\log n)$ space a point location structure that can answer a query for a point $p$ in $\subdiv$ in $O(\log \frac{1}{\dtb{p}^2})$ time if $\dtb{p} \geq \sqrt{3} / \sqrt[3]{n}$ and $O(\log^2 n)$ otherwise, where $\dtb{p}$ is the shortest distance from $p$ to nearest boundary facet of $\subdiv$.
\end{theorem}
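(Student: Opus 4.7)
The plan is to execute the blueprint sketched just before the statement: assemble a depth-bounded octree and Snoeyink's $O(\log^2 n)$-query, $O(n\log n)$-space point-location structure~$\pl(\subdiv)$, then combine them exactly as in the planar case (Lemmas~\ref{lem:2d-prep}~and~\ref{lem:2d-query}), replacing the sweep-line argument by a convexity argument.

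For preprocessing, I first build $\pl(\subdiv)$ in $O(n\log n)$ time. Then I construct the complete octree of depth $d:=\lceil\log\sqrt[3]{n}\rceil$, so each leaf has edge length between $1/(2\sqrt[3]{n})$ and $1/\sqrt[3]{n}$; this tree has $8^d = O(n)$ leaves and $O(n)$ grid-vertices. I issue one query in $\pl(\subdiv)$ per grid-vertex, spending $O(n\log^2 n)$ time in total, and mark a leaf \emph{boundary} precisely when its eight corners do not all belong to the same polyhedron of~$\subdiv$. The central observation, and the only place where the convexity of $\subdiv$ enters, is that a cube-shaped cell is crossed by a facet of $\subdiv$ if and only if at least two of its corners lie in different polyhedra: if all eight corners lie in one convex polyhedron, the cell, being contained in their convex hull, is contained in that polyhedron; conversely, any facet crossing the cell separates at least one pair of corners. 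I propagate marks upward, trim every subtree hanging below an unmarked node, and cache in each empty leaf a pointer to its enclosing polyhedron (known from the corner query). This uses $O(n\log^2 n)$ time and $O(n\log n)$ space.

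For a query point $p$, I descend the trimmed octree to the leaf~$v$ containing $p$, paying $O(\mathrm{depth}(v)) = O(\log n)$; if $v$ is empty I return its cached polyhedron, and otherwise I run a query in $\pl(\subdiv)$ at cost $O(\log^2 n)$. The time analysis is the 3D analogue of Lemma~\ref{lem:2d-query}. In the empty case the parent of~$v$ was split and therefore met the boundary, so $p$ and some boundary point both lie in the parent cube of edge length $2/2^{\mathrm{depth}(v)}$, giving $\dtb{p}\le 2\sqrt{3}/2^{\mathrm{depth}(v)}$ and hence $\mathrm{depth}(v) = O(\log(1/\dtb{p})) = O(\log(1/\dtb{p}^2))$. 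In the non-empty case $v$ itself meets the boundary and has space diagonal at most $\sqrt{3}/\sqrt[3]{n}$, so $\dtb{p} < \sqrt{3}/\sqrt[3]{n}$, and the total $O(\log^2 n)$ cost is the stated bound in that regime; contrapositively, whenever $\dtb{p}\ge\sqrt{3}/\sqrt[3]{n}$, the leaf is necessarily empty and the $O(\log(1/\dtb{p}^2))$ bound applies.

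The main obstacle is exactly the marking step: a naive extension of the 2D sweep would require maintaining the planar cross-section of $\subdiv$ cut by a sweep plane, a subdivision whose complexity can already be $\Theta(n)$, so it is unclear how to do the marking in near-linear time by sweeping. Convexity of the cells of $\subdiv$ is what rescues us, because it reduces the question ``does this cube meet a facet?'' to ``do its eight corners lie in a common polyhedron?'', which is answered by $O(n)$ independent point-location queries and therefore fits within the $O(n\log^2 n)$ budget.
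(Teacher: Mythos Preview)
Your proposal is correct and follows essentially the same route as the paper: build Snoeyink's $O(n\log n)$-space structure together with a depth-$\lceil\log\sqrt[3]{n}\rceil$ octree, mark leaves via point-location queries on the $O(n)$ grid vertices using the convexity observation that a cube meets a facet iff its eight corners are not all in the same polyhedron, trim, and query by descending the octree with a fallback to $\pl(\subdiv)$. Your justification of the convexity step and your query-time analysis are slightly more spelled out than the paper's, but the argument is the same.
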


\section{Conclusions}
We presented two data structures for distance-sensitive point location. The first and most general structure relies on decomposing a connected planar subdivision into constant complexity regions, such that any point that is far from the boundary is contained in a large region. We then showed how such a distance-sensitive decomposition is used to create a distance-sensitive point location structure. Computing the decomposition and the point location structure takes $O(n\log n)$ time and $O(n)$ space. A query for a point $p$ with distance $\dtb{p}$ to the nearest point on an edge of $\subdiv$ takes $O\left(\min \left(\log n, 1+ \log \frac{\area(P_i)}{\gamma_i\dtb{p}^2}\right)\right)$ time, where $\gamma_i$ denotes the given probability that a query falls in polygon $P_i$. Our distance-sensitive decomposition consists of triangles and quadrilaterals and may be non-conforming, that is, there may be T-junctions along their boundaries. An obvious questions is whether ``nicer'' decompositions are possible that have the same property that a point far from the boundary is guaranteed to be in a large region. We showed that if we insist on a conformal Steiner triangulation, then we cannot bound the number of regions as a function of $n$, the number of edges of $\subdiv$. For non-conformal triangulations or conformal quadrilaterals this questions is still open. Another interesting open question is if a similar decomposition is possible for subdivisions in three dimensions. Note, however, that this would not directly lead to distance-sensitive point location structure since, to our knowledge, no three-dimensional entropy-based point location structures are known.

We also presented a simpler structure that does not take into account the query distribution between different regions of the input subdivision $\subdiv$. Instead only the distance from a query point to the nearest edges of the subdivision is considered. This can be seen as a special case of the general distance-sensitive problem, where each each polygon $P_i$ has a probability $\gamma_i = \area(P_i)$. The point-location structure consists of a quadtree with a maximum depth of $\lceil \log \sqrt{n} \rceil$ and a general worst-case optimal point location structure, both of which can be constructed in $O(n \log n)$ time and $O(n)$ space. A query for a point $p$ then takes $O(\min(\log n, 1 + \log \frac{1}{\dtb{p}^2}))$ time. This is not asymptotically better than if we would use the general solution, but we believe this second structure is much simpler to construct and has a smaller overhead. The quadtree-based structure can also be extended to work for convex subdivisions in three dimensions. It takes $O(n\log^2 n)$ time and $O(n\log n)$ space to construct a worst-case efficient structure and a depth-bounded octree. A query then takes $O(1 + \log \frac{1}{\dtb{p}^2})$ time if $\dtb{p} \geq \sqrt{3} / \sqrt[3]{n}$ and $O(\log^2 n)$ time otherwise. Note that the $O(n\log n)$ space requirement comes from the worst-case efficient point location structure as no $O(n)$ space structure is yet known that has $O(\log^2 n)$ query time.

\section{References}

\end{document}